\newtheorem{theorem}{Theorem}
\newtheorem{lemma}[theorem]{Lemma}
\newtheorem{obs}[theorem]{Observation}
\newcommand{\remove}[1]{}
\title{On the minimum spanning tree problem in imprecise set-up\thanks{A preliminary version of this paper appeared in COCOON, 2018}}
\author[1]{Sanjana Dey}
\author[2]{Ramesh K. Jallu\thanks{Ramesh K. Jallu  was supported by the Czech Science Foundation, grant number GJ19-06792Y, and by institutional support RVO:67985807}}
\author[1]{Subhas C. Nandy}
\affil[1]{Indian Statistical Institute, Kolkata, India}
\affil[2]{Institute of Computer Science, The Czech Academy of Sciences}
\date{}
\begin{document}

\maketitle

\begin{abstract}
In this article, we study the Euclidean minimum spanning tree problem in an imprecise set-up. 
The problem is known as the \emph{Minimum Spanning Tree Problem with Neighborhoods} in the literature. We study the problem where the neighborhoods are represented as non-crossing line segments. Given a set ${\cal S}$ of $n$ disjoint line segments in $I\!\!R^2$, the objective is to find  
a minimum spanning tree (MST) that contains exactly one end-point from each segment in $\cal S$ and the cost of the MST is minimum among $2^n$ possible MSTs. We show that finding such an MST 
is NP-hard in general, and propose a $2\alpha$-factor approximation algorithm 
for the same, where $\alpha$ is the approximation factor of the best-known 
approximation algorithm to compute a minimum cost Steiner tree in an undirected 
graph with non-negative edge-weights. As an implication of our reduction, we can show that the unrestricted 
version of the problem (i.e., one point must be chosen from each segment such that the cost of 
MST is as minimum as possible) is also NP-hard.
We also propose a parametrized algorithm for the problem based on the ``separability'' parameter defined for segments.

\end{abstract}

{\bf Keywords:} Imprecise data, Minimum spanning tree, Minimum Steiner tree, Approximation algorithm, Computational geometry 

\section{Introduction} 
Points are basic objects when dealing with geometric optimization problems. In Computational geometry, most of the algorithms dealing with point objects assume that the co-ordinates of the points are precise.
However, it may not always be possible to have the precise or exact co-ordinates (or locations) of these points. For example, the point locations obtained from a navigation system such as GPS may not always be precise due to several reasons like signal obstruction, multi-path effects, etc. In such cases, the classical geometric algorithms might fail to produce correct solutions with imprecise data. The study has drawn much attention when dealing with geometric optimization problems associated with imprecise data.
In such problems, generally, the input instances are provided as regions (also known as regions of uncertainty) such as a set of points, disks, rectangles, line segments, etc., and the objective is to choose a point from each region such that the cost function of some desired geometric structure (e.g., a minimum spanning tree, a convex hull, a traveling salesman tour, an enclosing circle, etc.) or a measure (e.g., diameter, pairwise distance, shortest path, bounding box, etc.) constructed on the chosen point set is as minimum or maximum as possible.

In this article, we study the Euclidean minimum spanning tree (MST) problem under the imprecision model represented as disjoint line segments. The objective is to find a point from each segment such that the cost of the MST constructed is as minimum as possible. In the rest of the paper, we use the terms cost and weight interchangeably 
to indicate the cost of an edge or tree where ever applicable.

\section{Related Work}
The \emph{MST problem} has been a well-studied problem in both graph and geometric 
domain for decades. In the context of an edge-weighted graph $G$, the objective is to find a 
tree spanning all the nodes in $G$ such that the sum of the weights of all 
the tree edges is minimized. In the geometric 
setup, the nodes of the underlying graph correspond to a given set of objects, 
the graph is complete, and each edge of the graph is the distance (in some 
appropriate metric) between the objects corresponding to the nodes incident to 
that edge. If the objects are points in $I\!\!R^d$ and the distances are in the Euclidean metric, the problem 
is referred to as the {\it Euclidean MST} problem. In $I\!\!R^2$ and $I\!\!R^3$, the problem can be 
solved in $O(n\log n)$ and $O((n\log n)^{4/3})$ time, respectively. In $I\!\!R^d$, 
the best-known algorithm runs in sub-quadratic time \cite{agar1991}. 
For any given constant $\varepsilon >0$, it is possible to produce a 
$(1+\varepsilon)$-approximation for the Euclidean MST problem in $I\!\!R^d$ ($d \geq 2$)  
in $O(n \log n)$ time \cite{smid2016}. 
For a survey on the Euclidean 
MST problem for a given point set in $I\!\!R^d$, one can refer to \cite{BOSE2013818,Eppstein}.

A natural generalization of the spanning tree problem on a graph is the {\it Steiner tree problem} (STP), where the input is a simple undirected weighted graph with a distinguished subset of vertices, called \emph{terminal nodes}, and the
STP asks for a minimum cost tree spanning all the terminal nodes (which  may not span all the non-terminal nodes). The STP problem is one of the classical NP-hard problems, and it cannot be approximated within a factor of $\frac{96}{95}$ \cite{chlebik2008}. A series of approximation 
results (e.g., \cite{mehlhorn1988,zelikovsky1993,hougardy1999,promel2000,robins2005,byrka2010improved}) are available in the literature that improved approximation factors 2 to the current best $\ln 4 + \varepsilon$, where $\varepsilon >0$ is any constant. If the set of terminal nodes is the entire vertex set, then the minimum cost Steiner tree is nothing but the MST. The Euclidean STP problem is NP-hard even in $I\!\!R^2$, and unlike in graphs it admits polynomial-time approximation scheme (PTAS) \cite{arora1998}.

The MST problem under the imprecision model represented as disjoint unit disks was first studied in \cite{yang2007}. The problem has been shown to be NP-hard (later in \cite{dorrigiv2015}, it has been claimed that the hardness proof provided in \cite{yang2007} is erroneous and said to have given the correct proof) and admits a PTAS. For the regions of uncertainty modeled as disks or squares that are not necessarily pairwise disjoint, L{\"o}ffler and Kreveld \cite{loffler2010} showed that the problem is NP-hard. Dorrigiv et al. \cite{dorrigiv2015} studied the problem (together with it's maximization version, where the objective is to maximize the cost of MST) for disjoint unit disks and showed that the problems are NP-hard. Their reduction also ensures that both the versions do not admit an Fully PTAS, unless $P=NP$. The authors provided deterministic and parameterized approximation algorithms for the maximized version of the problem, and a parameterized approximation algorithm for the MST problem. In continuation, Disser et al. \cite{disser2014} studied the problem in $L_1$ metric with imprecision modeled as line segments, and showed that the problem is APX-hard. Recently, Mesrikhani et al. \cite{mesrikhani2019} studied both the versions of the problems for axis-aligned pairwise disjoint squares in the plane under $L_1$ metric, and proposed constant factor and parametrized approximation algorithms for the maximization of the problem.

The study of geometric optimization problems in imprecise set-up is not just restricted to the MST problem, but studied for many other important problems such as convex hull \cite{loffler2010}, traveling sales man problem \cite{arkin1994,dumitrescu2003,de2005tsp,elbassioni2009,mitchell2010}, shortest path \cite{loffler2010largest, disser2014}, enclosing circle, smallest diameter, largest bounding box \cite{loffler2010largest}. As described earlier, for all the problems mentioned above the input is provided by a set of regions (either continuous or discrete, and either disjoint or non-disjoint). From each region, one point must be selected such that the respective problem's objective is met for the chosen set of points.

\subsection{Our work}
We first concentrate on a different variation of the MST problem for a 
set $\cal S$ of non-crossing line segments in $I\!\!R^2$. The objective is to find  
an MST that contains exactly one end-point from each segment in $\cal S$ and the cost of the MST is as minimum as possible (see Figure \ref{fig:msts}). 
\begin{figure}[!h]
\centering
\includegraphics[scale=0.6]{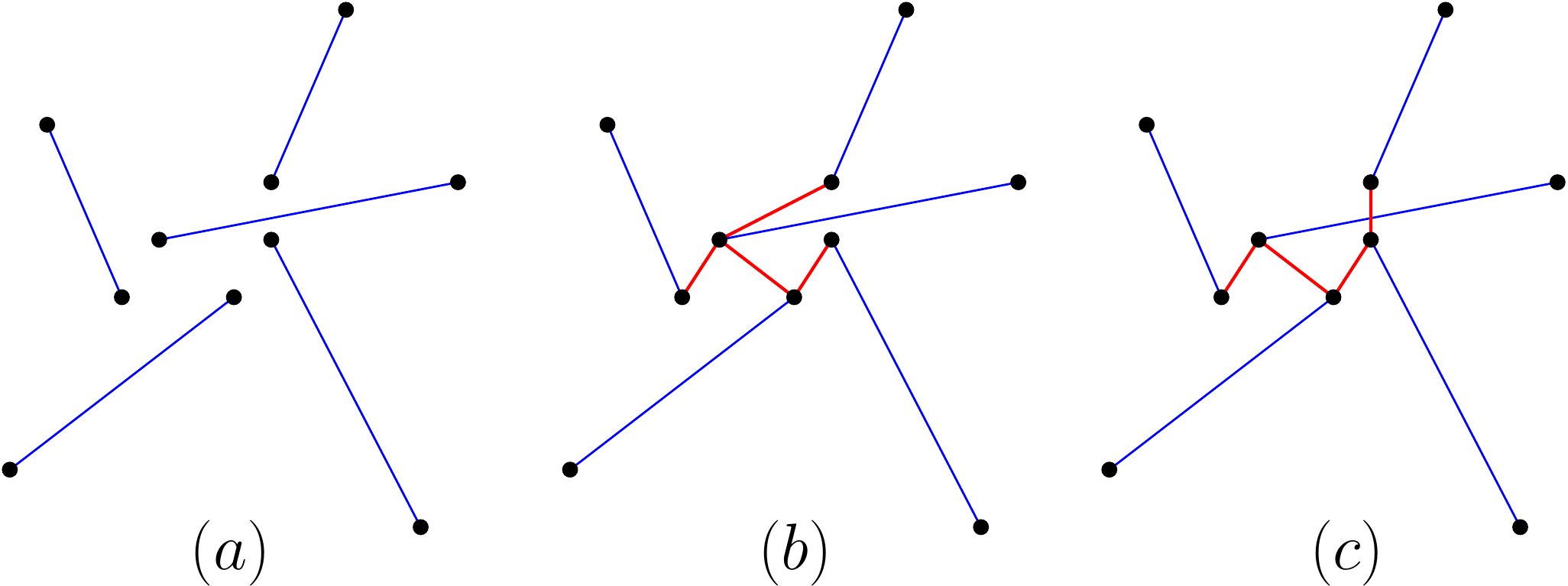}
\caption{(a) A set of line segments in the plane, (b) a spanning tree of the 
segments, and (c) a minimum spanning tree}\label{fig:msts}
\end{figure}

From now on, we will 
refer this problem as the {\em minimum spanning tree of 
segments} (MSTS) problem. Surely, if the segments in ${\cal S}$ are of 
length zero, then the MSTS problem reduces to the standard Euclidean MST problem.
We show that in the non-degenerate case, i.e., where not all the segments are of length zero, 
the MSTS problem is NP-hard (refer Section \ref{sec:msts_nphard}). As an implication of the reduction, we can show 
that the unrestricted MSTS problem (formally defined in Subsection \ref{sub_sec:implication}) is NP-hard. In Section \ref{sec:para_algo}, 
we propose a parametrized algorithm for the problem with 
the separability of the segments in ${\cal S}$ as the parameter, where the concept of separability is introduced in \cite{dorrigiv2015}.
A more general problem in this context is the {\it generalized minimum spanning tree} (GMST) problem for an edge weighted graph $G=(V,E)$, 
and a cost function $c: E \rightarrow I\!\!R^+$, 
where $V=\cup_{i=1}^m V_i$ and $V_i \cap V_j=\emptyset$ for all $i, j \in \{1,2,\ldots,m\}$ such that $i \neq j$; the objective is to choose exactly one vertex from each $V_i$ such that their spanning tree is of minimum cost.
In \cite{myung1995}, it was shown that the GMST problem is NP-hard, and no polynomial-time heuristic algorithm with a finite worst-case performance ratio can exist, unless $P = NP$. If $|V_i| \leq \rho$ 
(a constant), for all $i=1,2,\ldots, m$, and the cost function satisfies triangle inequality, then using  linear programming relaxation of the integer programming formulation of the problem, an algorithm can be designed with approximation factor
$(2-\frac{2}{n})\rho$ \cite{pop}. Our MSTS problem is the Euclidean version of the GMST problem with $\rho=2$, where the edge costs are Euclidean distances. Thus, the result of \cite{pop} suggests a straightforward 4-factor approximation algorithm for the MSTS problem. In Section \ref{sec:approx}, we propose an improved approximation algorithm which yields an MST whose cost is within a factor of $2\alpha$ of the optimum MST, where $\alpha$ is the best-known approximation factor for the Steiner tree problem for undirected graphs. In \cite{byrka2010improved}, it was shown that the value of $\alpha = \ln 4 + \varepsilon < 1.39$, where $\varepsilon > 0$ is any constant. Thus, we get a 2.78-factor approximation algorithm for the MSTS problem and also for the GMST problem considered in \cite{pop} .

A closely related problem in our context is the 2-Generalized minimum spanning tree (2-GMST) problem, in which each imprecise vertex contains 
exactly two points in the plane having the same $x$-coordinates. In \cite[p.~137]{fraser2013}, the 2-GMST problem is shown 
to be NP-hard and that no fully PTAS exists for the problem, unless $P = NP$. This reduction can be adopted to prove the MSTS problem NP-hard. However, to the best of our understanding, this reduction cannot be adopted to prove the NP-hardness of the unrestricted version of the MSTS problem. So, 
we provide a separate reduction for proving the unrestricted version of the MSTS problem to be NP-hard. It needs to be mentioned that, our hardness result is for a special case in which the segments are pairwise disjoint.

\section{The MSTS problem is NP-hard}\label{sec:msts_nphard}
We prove the MSTS problem is NP-hard by a reduction from the \textsc{Max 2-Sat} problem, 
defined below, and which is known to be NP-complete \cite{garey2002}.
 
\begin{description}
 \item [Instance:] A Boolean formula consisting of $n$ variables and a 
 set $\{C_1,C_2,\ldots,C_m\}$ of $m$ clauses, each $C_i$ is a disjunction 
 of at most two literals, and an integer $k$, $1 \leq k < m$.
 \item [Question:] Is there a truth assignment to the variables that 
 simultaneously satisfies at least $k$ 
 clauses?
\end{description}

Given an instance of \textsc{Max 2-Sat}, we get an instance of the MSTS problem 
 such that the given \textsc{Max 2-Sat} 
formula satisfies $k$ clauses if and only if the cost of the MSTS attains a specified value.

We represent a horizontal line segment $s$ as $(l(s),r(s))$, where $l(s)$ and $r(s)$ are 
the left and right end-points of $s$, respectively. Similarly, a vertical line segment $s$ is 
represented as $(t(s),b(s))$, where $t(s)$ and $b(s)$ are the top and bottom end-points of 
$s$, respectively. For a point $p$ in the plane, we use $x(p)$ and $y(p)$ to denote the 
$x$- and $y$-coordinate of $p$, respectively.

Our reduction is similar to \cite{daescu2010}, however, the gadgets we use in our reduction
are entirely different. Let $\psi$ be a \textsc{2-Sat} formula having $m$ clauses $C_1,C_2,\ldots,C_m$ 
and $n$ variables $x_1,x_2,\ldots,x_n$. 
We use the following notation as used in \cite{daescu2010}.
Let $x_{i,j,k}$ (or $\neg x_{i,j,k}$) be 
the variable $x_i$ that appears at the $j$-th literal in $\psi$ from left to 
right such that $x_i$ (including $\neg x_i$) appears $k-1$ times already in $\psi$ 
before the current occurrence of $x_i$. 
For example in the Boolean formula $(x_1 \vee \neg x_2) \wedge (x_2 \vee x_3)$, 
the literals $x_1, \neg x_2,x_2$, and $x_3$ are represented as 
$x_{1,1,1},\neg x_{2,2,1},x_{2,3,2}$, and $x_{3,4,1}$, respectively.
We create gadgets for each variable $x_i\;(1 \leq i \leq n)$ and for each 
literal $x_{i,j,k}$ (or $\neg x_{i,j,k}$). Each gadget consists of a set of horizontal and vertical line segments in the plane.

\begin{figure}[h]
  \centering
  {\includegraphics[scale=1]{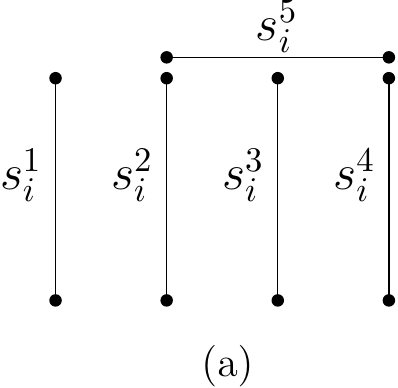}}\hspace{1cm}
  {\includegraphics[scale=1]{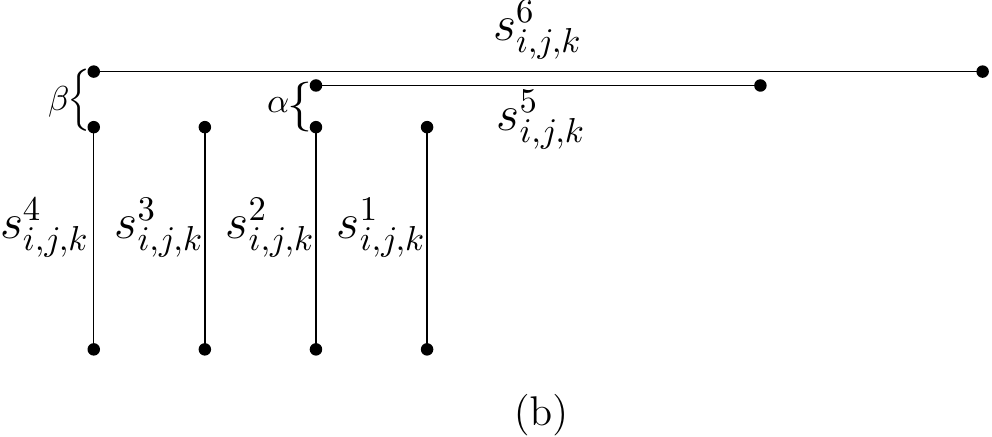}} 
  \caption{(a) Variable gadget for $x_i$, and (b) literal gadget for $x_{i,j,k}$}\label{fig:gadgets}
\end{figure}

{\bf Variable gadget:} For each variable $x_i$, five disjunct segments $s_i^l\;(1 \leq l \leq 5)$ 
are considered. The first four are vertical and the last one is horizontal. 
The vertical segments are of equal length (say $\lambda > 0$) and their top end-points are 
horizontally aligned with unit distance between consecutive end-points. The horizontal 
segment $s_i^5$ spans as follows: $x(l(s_i^5))=x(t(s_i^2)), y(l(s_i^5))=y(t(s_i^2))+\varepsilon, 
x(r(s_i^5))=x(t(s_i^4)), y(r(s_i^5))=y(t(s_i^2))+\varepsilon$,
where $\varepsilon$ is a very small positive real number (see Figure \ref{fig:gadgets}(a)).

{\bf Literal gadget:} For each literal $x_{i,j,k}$ (or $\neg x_{i,j,k}$),
six disjunct segments $s_{i,j,k}^l \;(1 \leq l \leq 6)$ are considered. The first four segments
are vertical, while the last two are horizontal (see Figure \ref{fig:gadgets}(b)). 
Here also the vertical segments are of same length and their top 
end-points are aligned as discussed in variable gadget.
The two horizontal segments are above the vertical segments and are of different lengths.
Their lengths depend on how many times its associated variable $x_i$ 
appears in different clauses of $\psi$. 
The precise co-ordinates for the left end-point of the horizontal segment $s_{i,j,k}^5$ (resp. $s_{i,j,k}^6$) are at $(x(t(s_{i,j,k}^2)), y(t(s_{i,j,k}^2))+\alpha)$
(resp. $(x(t(s_{i,j,k}^4)), y(t(s_{i,j,k}^4))+\beta)$), 
where $\alpha=2j\varepsilon$ (resp. $\beta=(2j+1)\varepsilon$). 
The two horizontal segments will be used to connect with the gadget of variable $x_i$.
In the connection, the right end-points of the horizontal segments are vertically aligned 
with $s_i^2$ and $s_i^4$ (see Figure \ref{fig:var-lit-gad}).

\begin{figure}[!h]
 \centering
 \includegraphics[scale=1]{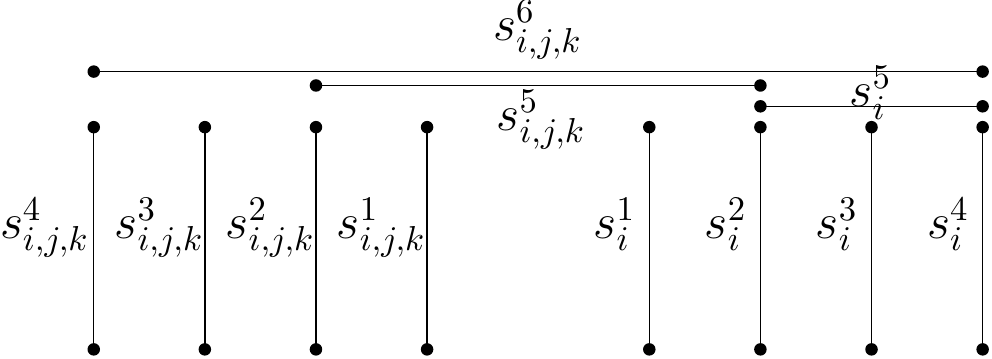}
 \caption{Connecting $x_{i,j,k}$'s gadget to $x_i$'s gadget}\label{fig:var-lit-gad}
\end{figure}

The basic idea of putting the segments in this manner is to get two minimum spanning 
trees of the segments for connecting the gadgets of $x_i$ and $x_{i,j,k}$ depending 
on whether $x_i$ is set to true or false respectively. If $x_i$ = {\it false}, then we 
choose the end-points 
$F=\{l(s_{i,j,k}^6)$, $t(s_{i,j,k}^4),t(s_{i,j,k}^3),t(s_{i,j,k}^2),t(s_{i,j,k}^1),t(s_i^1),t(s_i^2),
t(s_i^3),t(s_i^4)$, $l(s_i^5),r(s_{i,j,k}^5)\}$, and if 
$x_i$ = {\it true}, then we choose the end-points $T=\{t(s_{i,j,k}^4),t(s_{i,j,k}^3)$, $t(s_{i,j,k}^2),
t(s_{i,j,k}^1),t(s_i^1),t(s_i^2),t(s_i^3),t(s_i^4),l(s_{i,j,k}^5)$, $r(s_i^5),r(s_{i,j,k}^6)\}$.
In Figure \ref{fig:gad_mst} these two minimum spanning trees are shown in bold. 

\begin{figure}[!ht]
  \centering
  {\includegraphics[scale=.75]{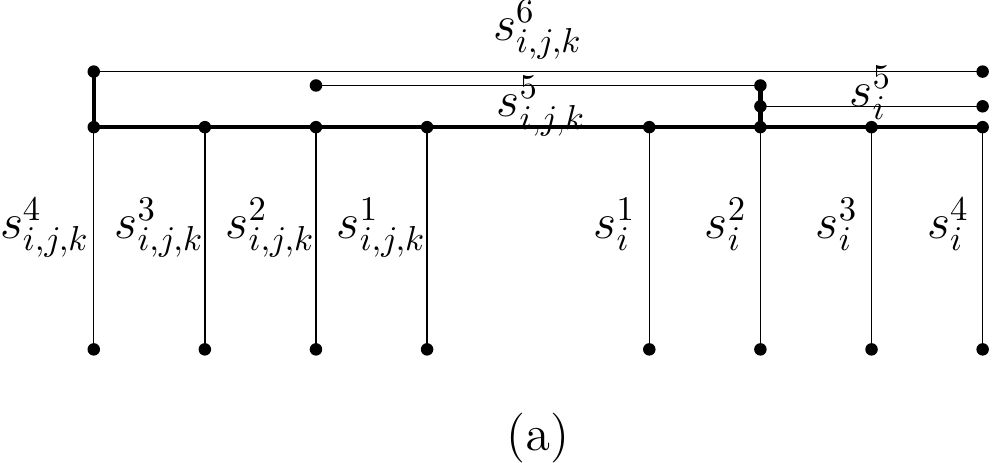}\label{fig:gad_mst2}}
  \hfill
  {\includegraphics[scale=.75]{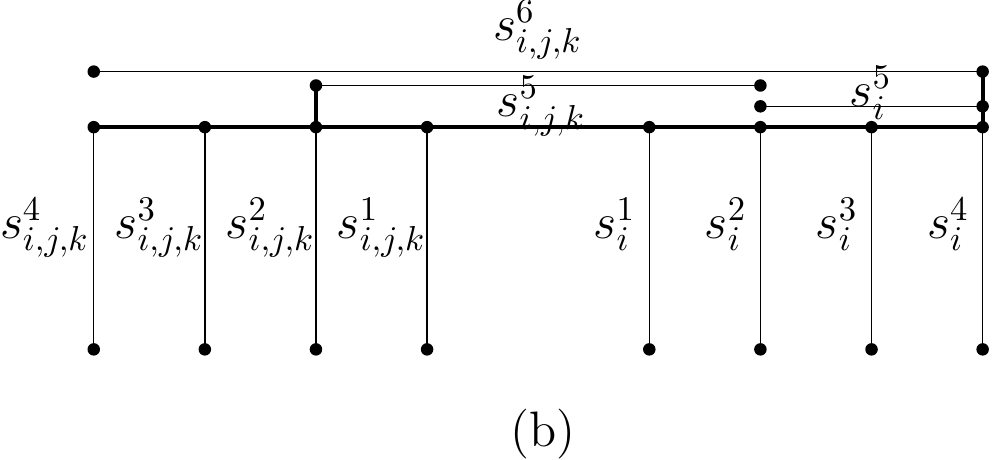}\label{fig:gad_mst1}}
  \caption{(a) The MSTS when $x_i$ is false, and (b) the 
  MSTS when $x_i$ true.}\label{fig:gad_mst}
\end{figure}

We now explain the arrangement of the gadgets according to the given formula $\psi$.
The gadgets corresponding to the variables $x_1,x_2,\ldots,x_n$ are placed on the positive part of the $x$-axis 
in left to right order with the top end-points of the vertical segments aligned with the $x$-axis. 
Similarly, the gadgets corresponding to the literals are arranged on the negative part of the 
$x$-axis from right to left (in the order they appear in $\psi$). Needless to mention, 
the $y$-coordinates of the top end-point of the vertical segments 
(in both variable and literal gadgets) are zero.
As mentioned earlier, the horizontal segments  of a literal gadget establish a connection
with its corresponding variable gadget\footnote{A variable gadget may be connected with 
multiple literal gadgets depending on how many times it appeared in $\psi$.}. 
Note that, the formula for the $y$-coordinates of the horizontal segments (in literal gadgets) ensure that no two horizontal 
segments will overlap. In Figure \ref{fig:example}, the arrangement of variable and literal gadgets 
with their connection is shown for $\psi = x_1 \vee x_2$.
\begin{figure}[!hb]
 \centering
 \includegraphics[scale=0.75]{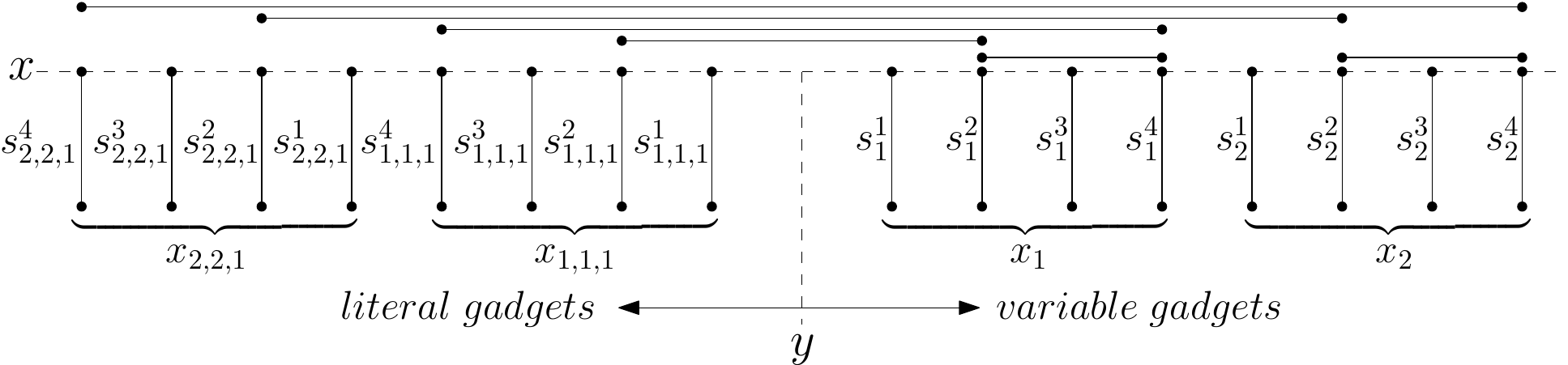}
 \caption{Placement of variable and literal gadgets in the 
 clause $x_1 \vee x_2$}\label{fig:example}
\end{figure}

According to our construction, we get a set ${\cal S}$ of $5n + 12m$ segments 
($5n$ segments correspond to $n$ variables and $12m$ segments correspond to $2m$ 
literals in $m$ clauses). Let ${\cal T}_{\cal S}$ be 
an MSTS over ${\cal S}$ and let ${\cal W}({\cal T}_{\cal S})$ be its weight. 

The tree ${\cal T}_{\cal S}$ satisfies the following properties: 
\begin{itemize}
\item[(i)] In every (variable and literal) 
gadget, the top end-points of all its vertical segments are part of it; 
\item[(ii)] No edge in ${\cal T}_{\cal S}$  
connects any two horizontal segments of literal gadgets;
\item[(iii)] No edge in ${\cal T}_{\cal S}$  
connects horizontal segments of any two variable gadgets;
and 
\item[(iv)] The horizontal segments of any literal $x_{i,j,k}$ 
(or $\neg x_{i,j,k}$) gadget are either connected  to its
vertical segments or segments in its associated variable $x_i$'s 
gadget using a vertical edge of ${\cal T}_{\cal S}$.
\end{itemize}
These properties justify the need of taking four equidistant 
vertical segments in each gadget as it prohibits 
from choosing non axis-parallel edges of ${\cal T}_{\cal S}$ 
to connect segments in the gadgets.

So far, we have discussed the arrangement of variable and literal 
gadgets and the nature of the minimum spanning tree 
of those segments for a true/false assignment of the $n$ variables. 
Now, we add one special horizontal segment\footnote{this segment corresponds to the binary 
relation \textbf{\textit{or}}.} for each clause in $\psi$. 
These segments appear in the same vertical level in which the 
horizontal segments of the variable gadgets appear, 
and the placement of the segment for a clause is as follows: for $i_1,i_2\in \{1,2,\ldots,n\}$, consider a clause $x_{i_1} \vee x_{i_2}$ in $\psi$ such that $x_{i_1}$ (resp. $x_{i_2}$) 
appears at the $j$-th (resp. ($j+1$)-st) literal in $\psi$ from left to 
right such that $x_{i_1}$ (resp. $x_{i_2}$) appears $k_1-1$ (resp. $k_2-1$) times already in $\psi$ 
before the current occurrence of $x_{i_1}$ (resp. $x_{i_2}$). That is, the literals $x_{i_1}$ and $x_{i_2}$ 
are represented as $x_{i_1,j,k_1}$ and $x_{i_2,j+1,k_2}$, respectively. 
We place a segment, say $s^{\lceil\frac{j}{2}\rceil}$, that connects the gadgets correspond 
to $x_{i_1,j,k_1}$ and $x_{i_2,j+1,k_2}$. The placement of the right and left end-points of 
the segment $s^{\lceil\frac{j}{2}\rceil}$ depends on how $x_{i_1}$ and $x_{i_2}$ appear (i.e., 
whether negative or positive) in $\psi$. If $x_{i_1}$ appears as 
positive (resp. negative) literal in $\psi$, then the \emph{right} end-point of the segment 
$s^{\lceil\frac{j}{2}\rceil}$ is at $\varepsilon$ vertical distance away from the top end-point of 
$s_{i_1,j,k_1}^2$ (resp. $s_{i_1,j,k_1}^4$). Similarly, if $x_{i_2}$ appears as 
positive (resp. negative) literal in $\psi$, then the \emph{left} end-point of the segment 
$s^{\lceil\frac{j}{2}\rceil}$ is at $\varepsilon$ vertical distance away from the top end-point of  
$s_{i_2,j+1,k_2}^2$ (resp. $s_{i_2,j+1,k_2}^4$).
\remove{Depending on whether the literal $x_{i,j,k}$ corresponds to the variable $x_i$ or its negation $\neg x_i$, 
we put the corresponding end-point of the segment 
$s^{\lceil\frac{j}{2}\rceil}$ on the top of 
$s_{i,j,k}^2$ or $s_{i,j,k}^4$, respectively.}
Figure \ref{fig:example1}(a) and Figure \ref{fig:example1}(b) demonstrate 
these new horizontal segments (with its end-points represented as black squares) 
for clauses $x_1 \vee x_2$ and $x_1 \vee \neg x_2$, respectively. 
Figure \ref{fig:2sat_ex} shows the MSTS problem instance for the Boolean formula 
$\psi = (\neg x_1 \vee x_2)\wedge (x_2 \vee \neg x_3)$.
 Let ${\cal S}' = {\cal S} \cup \{s^{\lceil\frac{j}{2}\rceil} \mid j=1,3,\ldots,2m-1\}$, 
 and let ${\cal T}$ be an MSTS of this extended set ${\cal S}'$ of 
 $5n + 13m$ segments with weight $W({\cal T})$.

\begin{figure}[!t]
 \centering
 \includegraphics[scale=0.75]{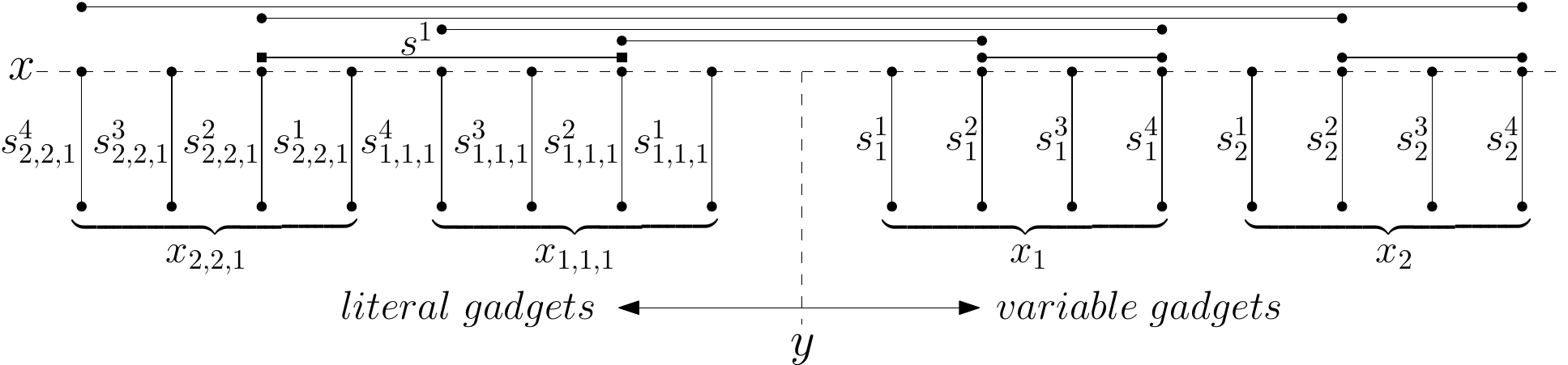} \\ (a) \\
  \includegraphics[scale=0.75]{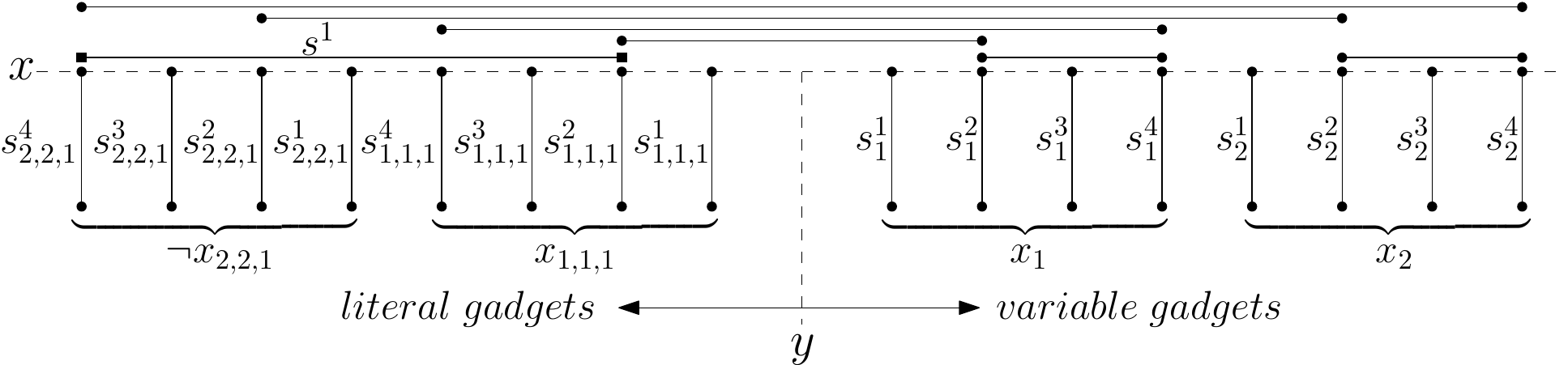} \\(b)
 \caption{Gadget for the clause (a) $x_1 \vee x_2$, and (b) $x_1 \vee \neg x_2$}\label{fig:example1}
\end{figure}

\begin{figure}[!h]
 \centering
 \includegraphics[scale=0.85]{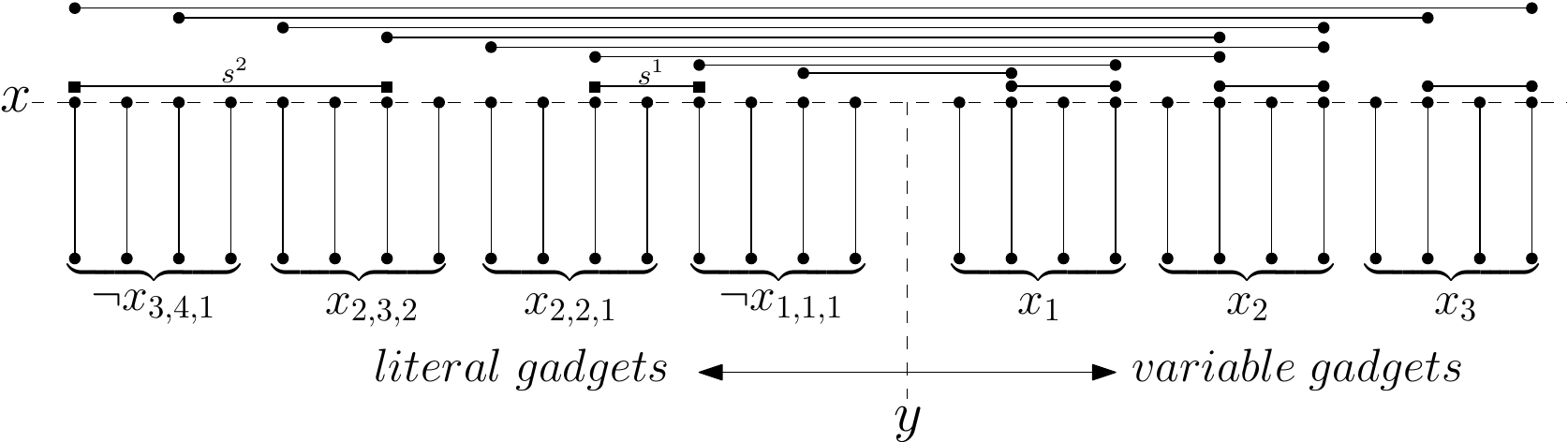}
 \caption{Gadget for $\psi = (\neg x_1 \vee x_2)\wedge (x_2 \vee \neg x_3)$}\label{fig:2sat_ex}
\end{figure}

\begin{lemma}\label{lem:msts_assign}
From an MSTS ${\cal T}$ of ${\cal S}'$ one can get a conflict-free assignment of the variables. 
\end{lemma}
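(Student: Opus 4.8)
The plan is to show that an MSTS $\mathcal{T}$ of $\mathcal{S}'$ must "behave consistently" inside each variable gadget and across all literal gadgets that share a variable, so that reading off which end-points of the variable gadget are chosen yields a well-defined truth value for each $x_i$, and this value is forced to agree with the choices made in every literal gadget attached to $x_i$.

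First I would record the structural properties (i)–(iv) already stated: in every gadget all top end-points of the four vertical segments are selected, no edge of $\mathcal{T}$ joins two horizontal segments of literal gadgets, no edge joins the horizontal segments of two variable gadgets, and each horizontal segment of a literal gadget is connected by a vertical edge either to its own vertical segments or to the associated variable gadget. I would justify these by the "four equidistant vertical segments" argument sketched after the property list — any non-axis-parallel edge used to connect inside a gadget is strictly longer than the axis-parallel alternative, because the unit horizontal spacing together with the segment length $\lambda$ makes the diagonal cost exceed $1$, and similarly the $\varepsilon$-offsets of the horizontal segments make a slanted connection more expensive than the vertical one. Given (i)–(iv), the only freedom $\mathcal{T}$ has within a single variable gadget $x_i$ is whether the horizontal segment $s_i^5$ is attached at its left end-point $l(s_i^5)$ (above $s_i^2$) or its right end-point $r(s_i^5)$ (above $s_i^4$); I would define $x_i = \textit{false}$ in the first case and $x_i = \textit{true}$ in the second.

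Next I would argue local optimality forces every literal gadget for $x_i$ to select the end-point set $F$ (resp. $T$) described before Figure \ref{fig:gad_mst} precisely when the variable gadget is in the \textit{false} (resp. \textit{true}) configuration. The point is that the two possible minimum spanning trees shown in bold in Figure \ref{fig:gad_mst} have equal cost, but they connect the literal gadget to the variable gadget through $s_i^2$ in one case and through $s_i^4$ in the other; if the variable gadget's horizontal segment $s_i^5$ were attached on the opposite side, the literal gadget would be forced to route its connection through a longer path (crossing the gadget horizontally), contradicting minimality. Here one uses that the clause segments $s^{\lceil j/2 \rceil}$ sit at the same vertical level and can attach at only one of the two columns, so there is no "cheap" way to mix the two configurations. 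I would make this quantitative by an exchange argument: starting from any $\mathcal{T}$, if some literal gadget for $x_i$ disagrees with its variable gadget's configuration, one can re-route the offending connections and strictly decrease the total weight, so $\mathcal{T}$ was not an MSTS.

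The main obstacle I expect is the bookkeeping in this exchange argument when a variable $x_i$ occurs many times and when clause segments are present: one must verify that flipping a single literal gadget to agree with its variable does not create a disconnection or a new crossing elsewhere, and that the $\varepsilon$-scaled vertical offsets (with $\alpha = 2j\varepsilon$, $\beta=(2j+1)\varepsilon$) are small enough that no two horizontal segments interfere yet large enough that the intended vertical edges are uniquely shortest. Once these local facts are in place, the conflict-free assignment is immediate: each $x_i$ gets the value determined by its variable gadget, and by the forcing argument every literal gadget — hence every occurrence of $x_i$ in $\psi$ — is consistent with that value, which is exactly the assertion of the lemma.
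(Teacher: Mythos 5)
Your proposal is correct and follows essentially the same route as the paper's proof: normalize $\mathcal{T}$ using the weight-preserving alterations, read the truth value of $x_i$ from which end-point of $s_i^5$ (equivalently, which of $t(s_i^2)$, $t(s_i^4)$) participates in the tree, and invoke minimality to force every literal gadget of $x_i$ into the matching configuration of Figure \ref{fig:gad_mst}. The extra detail you supply on the exchange argument and the role of the $\varepsilon$-offsets is a faithful elaboration of what the paper leaves implicit, not a different approach.
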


\begin{proof}
Without loss of generality, we assume that the horizontal 
segments of the literal and variable gadgets are connected to 
vertical segments in ${\cal T}$, in one of the ways shown in 
Figure \ref{fig:gad_mst}. If there is a connection of a pair of segments in ${\cal T}$ which is not 
in either of the forms as shown in Figure \ref{fig:gad_mst} (see the curved edges in the left 
part of Figure \ref{fig:tree_alter}), then we can alter the tree without 
changing its weight (see the right part in Figure \ref{fig:tree_alter}).

\begin{figure}[!h]
 \centering
 \includegraphics[scale=0.75]{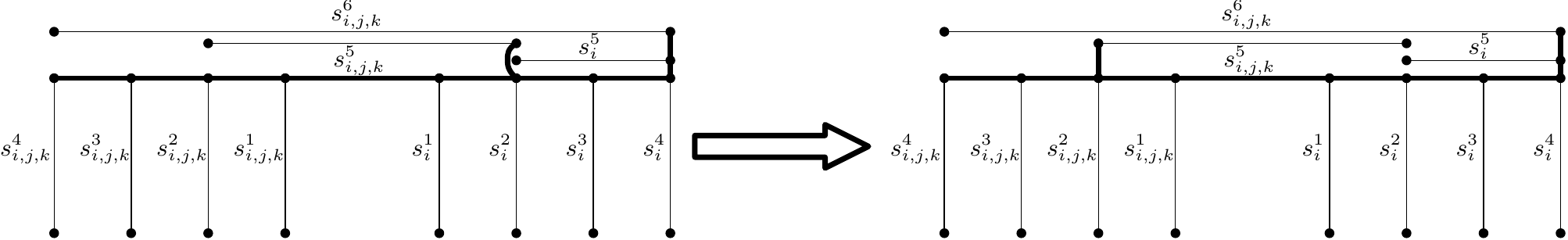} \\
 \vspace*{.5cm}
 \includegraphics[scale=0.75]{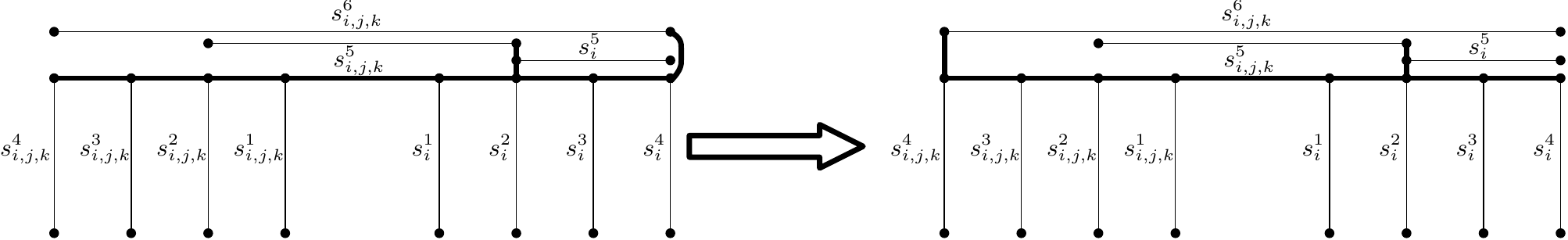} \\ 
 \caption{The possible tree alterations in an MSTS}\label{fig:tree_alter}
\end{figure}

For every variable $x_i$'s gadget,
we check which end-point between $t(s_i^2)$ and $t(s_i^4)$ is chosen to connect to its horizontal 
segment $s_i^5$. Note that, both the end-points 
cannot be connected to the horizontal segment simultaneously due to the feasibility\footnote{only 
one end-point of a segment can participate in the tree} of the problem. 
By our construction of the gadgets, if $t(s_i^2)$ is connected then we set $x_i ={\it false}$, else  
set $x_i ={\it true}$. 
Also note that, if a variable $x_i$ gets an assignment it never changes.
Because ${\cal T}$ is a spanning tree with minimum weight, if $t(s_i^2)$ (or $t(s_i^4)$) 
is chosen, then this choice forces to connect the horizontal segments' 
end-points vertically aligned with it.
Thus, we get a conflict-free assignment of the variables. 
\end{proof}

\begin{lemma}
 The \textsc{Max 2-Sat} instance $\psi$ satisfies at least $k$ clauses simultaneously if and only if 
 the weight of ${\cal T}$ is at most  $W({\cal T}_{\cal S}) + (m-k)\varepsilon$, where 
 $m$ is the number of clauses in $\psi$, $\varepsilon>0$ is a 
 very small real number, and $1 \leq k < m$.
\end{lemma}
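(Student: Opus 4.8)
I would prove the two implications separately, in both cases leaning on the fact (implicit in Figure \ref{fig:gad_mst}, where both configurations are called ``the MSTS'') that the two canonical ways of wiring a variable gadget to an adjacent literal gadget have \emph{equal} weight, so that $W({\cal T}_{\cal S})$ is literally the same number for every conflict-free assignment of the $n$ variables.

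($\Rightarrow$) Suppose a truth assignment $\tau$ satisfies at least $k$ of the $m$ clauses. The plan is to exhibit a spanning tree of ${\cal S}'$ of weight $W({\cal T}_{\cal S}) + (m-k)\varepsilon$; since ${\cal T}$ is a minimum one, the bound follows. First I would build, from $\tau$, the base tree ${\cal T}_{\cal S}$ of ${\cal S}$: in each variable gadget let $t(s_i^2)$ receive the horizontal segment if $\tau(x_i)=\textit{false}$ and $t(s_i^4)$ receive it if $\tau(x_i)=\textit{true}$, and wire every literal gadget $x_{i,j,k}$ to the gadget of $x_i$ by the configuration $F$ or $T$ of Figure \ref{fig:gad_mst} dictated by $\tau(x_i)$; by the opening remark this tree has weight $W({\cal T}_{\cal S})$. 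Then I would attach the $m$ clause segments one at a time. For a clause $C=x_{i_1}\vee x_{i_2}$ satisfied by, say, its first literal, the endpoint of $s^{\lceil j/2\rceil}$ sitting $\varepsilon$ above the designated top endpoint ($t(s_{i_1,j,k_1}^2)$ or $t(s_{i_1,j,k_1}^4)$, depending on the sign of the literal) is positioned so that a local rerouting — replacing part of the literal-to-variable connection of $x_{i_1,j,k_1}$ by a path running through $s^{\lceil j/2\rceil}$ — absorbs the clause segment at zero net change of weight. If $C$ is satisfied by neither literal, this rerouting is geometrically impossible and the cheapest legal attachment of $s^{\lceil j/2\rceil}$ into the current tree costs an extra $\varepsilon$. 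Carrying out these $m$ local modifications independently yields the desired tree of weight $W({\cal T}_{\cal S})+(m-k)\varepsilon$.

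($\Leftarrow$) Conversely, let ${\cal T}$ be an MSTS of ${\cal S}'$ with $W({\cal T})\le W({\cal T}_{\cal S})+(m-k)\varepsilon$. As in the proof of Lemma \ref{lem:msts_assign}, using properties (i)--(iv) together with the weight-preserving alterations of Figure \ref{fig:tree_alter} (now also applied to edges incident to clause segments, which introduces a few new but analogous cases), I would first put ${\cal T}$ in canonical form, in which every gadget is wired by one of the two configurations of Figure \ref{fig:gad_mst} and every clause segment is attached by a single short vertical edge. Lemma \ref{lem:msts_assign} then extracts a conflict-free assignment $\tau$. Next I would show that in canonical form $W({\cal T}) = W({\cal T}_{\cal S}) + \sum_{C} c(C)$, where $c(C)=0$ if $\tau$ satisfies $C$ and $c(C)\ge\varepsilon$ otherwise — the exact converse of the surgery above: a clause segment can be absorbed without extra cost only when one of its literals is wired in the configuration that makes $\tau$ satisfy $C$. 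Consequently the number of clauses left unsatisfied by $\tau$ is at most $\bigl(W({\cal T})-W({\cal T}_{\cal S})\bigr)/\varepsilon\le m-k$, so $\tau$ satisfies at least $k$ clauses.

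The crux, and the step I expect to be the main obstacle, is the weight bookkeeping for the clause segments: establishing that a satisfied clause is absorbable with \emph{exactly} zero net cost while an unsatisfied clause forces \emph{at least} $\varepsilon$, and that these contributions are genuinely additive even when one variable gadget is shared by several literal gadgets and several clause segments occupy the common horizontal level at height $\varepsilon$. Making the $\varepsilon$-scale perturbation analysis rigorous — so that no non-axis-parallel edge ever becomes profitable and the four equidistant vertical segments really do enforce properties (i)--(iv) — is the routine but indispensable supporting work.
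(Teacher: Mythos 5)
Your proposal follows essentially the same route as the paper: construct the tree from a satisfying assignment so that each satisfied clause's segment attaches at zero extra cost and each unsatisfied one costs $\varepsilon$, and conversely canonicalize ${\cal T}$, extract the assignment via Lemma \ref{lem:msts_assign}, and charge each residual $\varepsilon$ to an unsatisfied clause. Your explicit remarks that $W({\cal T}_{\cal S})$ must be assignment-independent and that the per-clause $\varepsilon$ contributions must be shown additive are points the paper glosses over, but the argument is the same.
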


\begin{proof}
{(\bf Necessity)} Let $\psi$ satisfies at least $k$ clauses simultaneously.
For a literal $x_{i,j,k}$ (or $\neg x_{i,j,k}$), if variable $x_i$ is true (or false),
then $T$ (or $F$) is chosen accordingly; this implies, one of the 
end-points of the segment $s^{\lceil\frac{j}{2}\rceil}$ 
can be chosen to connect with ${\cal T}_{\cal S}$ in MSTS ${\cal T}$ without 
adding any extra weight (see Figure \ref{fig:2sat_msts}, where the blue edge 
implies $\neg x_1$ is {\it true}, i.e., $x_1$ is {\it false}).
Therefore, if at least one literal is true in one 
clause, no extra weight will be added to 
${\cal T}_{\cal S}$ for that clause to get $W({\cal T})$. If both the 
literals are false, then the extra weight of $\varepsilon$
will be added to $W({\cal T}_{\cal S})$. Therefore, if 
$\psi$ satisfies at least $k$ clauses simultaneously, then an extra weight of at most
$(m-k)\times \varepsilon$ will be added to $W({\cal T}_{\cal S})$ 
(see the red edge in Figure \ref{fig:2sat_msts}).
\begin{figure}[!h]
 \centering
 \includegraphics[scale=0.85]{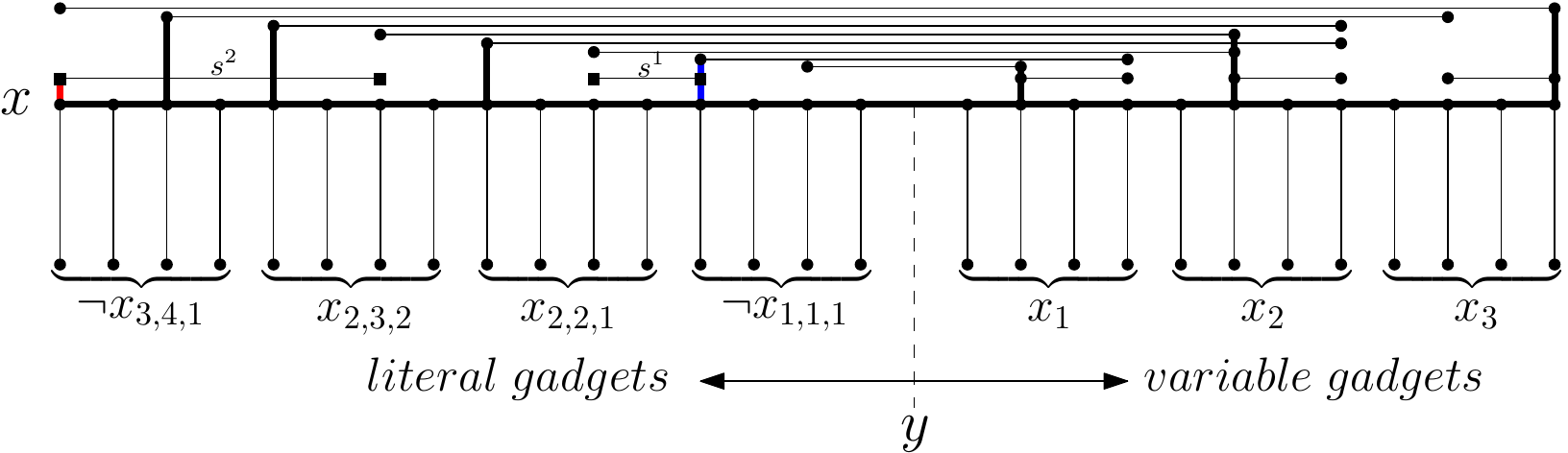}
 \caption{The MSTS (shown in thick) obtained for the assignment 
 $x_1={\it false},x_2={\it false}, \text{ and } x_3={\it true}$}\label{fig:2sat_msts}
\end{figure}

\noindent {(\bf Sufficiency)} Let ${\cal T}$ be an MSTS over 
${\cal S}'$ such that 
$W({\cal T}) \leq W({\cal T}_{\cal S}) + (m-k)\varepsilon$. 
We show that there is a truth assignment of variables such that at least $k$ clauses are satisfied. In any 
gadget, all the top end-points of its vertical segments are part of ${\cal T}$. 
Let in ${\cal T}$, the horizontal segments of literal and variable gadgets are connected to 
vertical segments in one of the ways as shown in 
Figure \ref{fig:gad_mst} (if not, we can alter the tree as discussed 
in the proof of Lemma \ref{lem:msts_assign}). 

By Lemma \ref{lem:msts_assign}, it is guaranteed that we can obtain a 
conflict-free assignment to the variables. Now consider the second term (i.e., $(m - k)\varepsilon$) 
of $W({\cal T})$. Each factor $\varepsilon$ is due to the non-existence of an
edge between some segment $s^{\lceil\frac{j}{2}\rceil}$ end-points and any one of the horizontal 
segments' end-points of the $j$-th and $(j+1)$-st literal 
gadgets. We assign {\it false} to both the literals in the clause associated with 
the segment $s^{\lceil\frac{j}{2}\rceil}$, implying the clause 
corresponding to $s^{\lceil\frac{j}{2}\rceil}$ is not satisfied. 
Since at most $m-k$ clauses are not satisfied, the result follows.
\end{proof}

Given a designated set of end-points (one marked end-point of each segment) of $\cal S$
and a parameter $\mu$, in polynomial-time it can be decided whether the sum of lengths of the 
edges of the MST of those points is less than or equal to $\mu$ by simply computing an MST. 
Thus, we have the following result:

\begin{theorem}\label{thm:msts_hard}
The decision version of the MSTS problem is NP-complete.
\end{theorem}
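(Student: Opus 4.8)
The plan is to establish the two halves of NP-completeness separately, reusing the machinery already built. Membership in NP is the easy direction: a certificate for a ``yes'' instance $({\cal S},\mu)$ is simply a selection of one end-point from each of the $n$ segments, which has size linear in the input; given such a selection one computes the Euclidean MST of the $n$ chosen points in $O(n\log n)$ time and checks whether its total length is at most $\mu$. (As usual for Euclidean network problems this final check is a sum-of-square-roots comparison; it causes no trouble here because it need only be carried out to the precision forced by the rational input, and in the instances produced by our reduction the relevant spanning trees use only axis-parallel edges, so the weights involved are rational.) Hence the decision version of MSTS is in NP.

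For NP-hardness I would reduce from \textsc{Max 2-Sat}. Given a formula $\psi$ with $m$ clauses and $n$ variables together with an integer $k$, $1\le k<m$, carry out the construction described above to obtain the set ${\cal S}'$ of $5n+13m$ pairwise disjoint horizontal and vertical segments. Every coordinate appearing in the construction is an integer or a small integer multiple of the fixed parameter $\varepsilon$; choosing $\varepsilon$ to be a rational $1/q$ with $q$ a polynomially bounded integer, large enough that the $O(m)$ distinct horizontal levels do not collide and that each $\varepsilon$-perturbation is dominated by the unit gaps between consecutive vertical segments (so that properties (i)--(iv) and the two forced tree shapes of Figure~\ref{fig:gad_mst} really do hold), keeps all coordinates of polynomial bit-length, and the whole instance is produced in time polynomial in the size of $(\psi,k)$. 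The threshold is then set to $\mu := W({\cal T}_{\cal S}) + (m-k)\varepsilon$, where $W({\cal T}_{\cal S})$ is the weight of the minimum spanning tree of segments over the sub-instance ${\cal S}$ (the $5n+12m$ gadget segments, with the $m$ clause segments removed); since every per-gadget connection is forced up to a choice of equal-weight shapes, $W({\cal T}_{\cal S})$ is a sum of polynomially many axis-parallel edge lengths and is computable in polynomial time.

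It then remains only to invoke the equivalence already established. By the preceding (equivalence) lemma, $\psi$ admits a truth assignment satisfying at least $k$ clauses if and only if there is an MSTS of ${\cal S}'$ of weight at most $W({\cal T}_{\cal S}) + (m-k)\varepsilon = \mu$, and Lemma~\ref{lem:msts_assign} guarantees that the assignment read off from an optimal tree is conflict-free; this is exactly the assertion that $({\cal S}',\mu)$ is a ``yes'' instance of decision MSTS. Combined with membership in NP, this proves the theorem.

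I expect the only genuine work to sit in the middle paragraph: checking that one polynomially bounded choice of $\varepsilon$ simultaneously makes every variable, literal, and clause gadget behave exactly as depicted in the earlier figures (no overlapping horizontal segments, the two canonical MSTS shapes per gadget, and a cost gap of precisely $\varepsilon$ per unsatisfied clause) while keeping all coordinates and the threshold $\mu$ of polynomial size, and making precise the (standard) remark that the MST-weight comparison needed for NP membership is effectively polynomial-time on the numbers that actually arise. Neither point is conceptually deep, but both are exactly where a careless reduction could silently fail to be polynomial, so they are what I would write out with care.
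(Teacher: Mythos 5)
Your proposal is correct and follows essentially the same route as the paper: membership in NP via computing the MST of a guessed end-point selection and comparing against $\mu$, plus NP-hardness from the \textsc{Max 2-Sat} reduction already established in Lemmas \ref{lem:msts_assign} and the equivalence lemma, with threshold $\mu = W({\cal T}_{\cal S}) + (m-k)\varepsilon$. Your added care about choosing a rational polynomially bounded $\varepsilon$, the polynomial-time computability of $W({\cal T}_{\cal S})$, and the sum-of-square-roots comparison fills in encoding details the paper leaves implicit, but it is a refinement of the same argument rather than a different approach.
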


\subsection{Implication of the reduction}\label{sub_sec:implication}
\begin{enumerate}
 \item The MSTS problem remains NP-hard even for the case where all the segments are horizontal. Here, the same reduction works 
by replacing each vertical segment by a segment of length zero.

\item By a simple modification to the horizontal segments in the reduction in Section \ref{sec:msts_nphard} can be used to show the unrestricted version of the MSTS problem, we call it MIN-MSTS problem, is NP-hard. The MIN-MSTS problem is formally defined below. 
\begin{description}
 \item[The MIN-MSTS problem]
 \item[Instance] A set $\cal S$ of $n$ disjoint segments.
 \item[Objective] Choose one point form each segment such that the cost of the MST of the chosen points is as small as possible.
\end{description}

We achieve the NP-hardness of the MIN-MSTS by extending all the horizontal segments by a small unit, say $\frac{1}{2}$, from both the end-points, see Figure \ref{fig:2sat_msts_mod}. In the new arrangement of gadgets, the MST chooses a point from each horizontal segment, which was an end-point in the previous reduction, and the cost of the MST is same as the MST obtained for the MSTS problem. 
\begin{figure}[!h]
 \centering
 \includegraphics[scale=0.85]{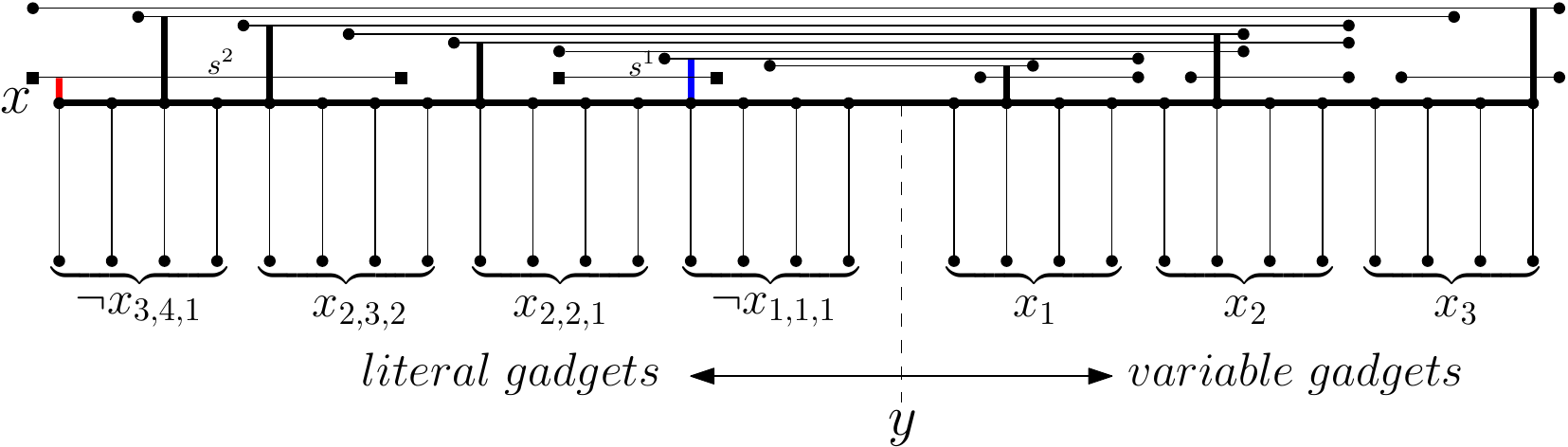}
 \caption{Modified reduction}\label{fig:2sat_msts_mod}
\end{figure}

\end{enumerate}
\begin{theorem}
 The MIN-MSTS problem is NP-hard.
\end{theorem}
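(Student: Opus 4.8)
The plan is to reuse the NP-hardness reduction from \textsc{Max 2-Sat} that was just established for the MSTS problem, modifying it only slightly so that the optimal solution to the \emph{unrestricted} MIN-MSTS instance is forced to behave exactly like a solution to the restricted MSTS instance. Concretely, I would start from the set ${\cal S}'$ of $5n+13m$ segments constructed in Section~\ref{sec:msts_nphard}, keep all vertical segments untouched, and replace every horizontal segment (those of the variable gadgets, the literal gadgets, and the clause segments $s^{\lceil j/2\rceil}$) by a slightly longer horizontal segment obtained by extending it by a fixed small amount, say $\tfrac12$, beyond each of its two endpoints. Call the resulting instance ${\cal S}''$. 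The key claim is: there is an optimal MIN-MST solution for ${\cal S}''$ in which, on each extended horizontal segment, the chosen point is one of the two \emph{original} endpoints (the endpoints used in the MSTS reduction), and moreover the cost of this optimal solution equals the cost $W({\cal T})$ of the corresponding MSTS on ${\cal S}'$. Combined with the \textsc{Max 2-Sat} equivalence already proved, this yields the NP-hardness.

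The main steps would be carried out in this order. First, I would argue that on each vertical segment the optimal MIN-MST still chooses the top endpoint: the vertical segments are short, equidistant, and their tops are aligned, exactly the configuration that Properties (i)--(iv) of ${\cal T}_{\cal S}$ exploited, and extending only the horizontal segments does not change this local analysis, so the same exchange argument applies. Second, for each horizontal segment I would show that pushing the chosen point \emph{inward} past an original endpoint never helps: the horizontal segments only ever connect (via a vertical edge) to the top of a specific vertical segment directly below one of their original endpoints, so moving the chosen point away from being directly above that vertical top strictly increases the length of that connecting edge while not shortening anything else; hence an optimal solution can be taken with the chosen point exactly at an original endpoint (and the small $\tfrac12$ extensions, being ``dead'' overhangs, are never used). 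Third, once the chosen points coincide with a feasible endpoint selection of the old MSTS instance, the induced MST has exactly the same edge set and hence the same cost, so $\mathrm{OPT}_{\text{MIN-MSTS}}({\cal S}'') = W({\cal T}) = \mathrm{OPT}_{\text{MSTS}}({\cal S}')$. Finally, I would invoke the lemma relating $W({\cal T})$ to the number of satisfied clauses: $\psi$ satisfies at least $k$ clauses iff $\mathrm{OPT}_{\text{MIN-MSTS}}({\cal S}'') \le W({\cal T}_{\cal S}) + (m-k)\varepsilon$, which is the desired polynomial-time reduction, establishing that MIN-MSTS is NP-hard.

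The step I expect to be the main obstacle is the second one: rigorously ruling out that an optimal MIN-MST gains anything by selecting an interior point of an extended horizontal segment, rather than one of the original endpoints. Unlike the restricted MSTS case, here the adversary genuinely has a continuum of choices on each horizontal segment, so I must show that the local ``snap to the original endpoint'' exchange is globally non-increasing in cost --- in particular that it never creates a cheaper cross-gadget shortcut, which is precisely what the four equidistant vertical segments and the tiny vertical offsets $j\varepsilon, (2j\pm1)\varepsilon$ were designed to prevent. The argument should go through because the extensions are chosen tiny (a fixed $\tfrac12$, or more carefully some quantity small relative to the unit gadget spacing but large relative to $\varepsilon$) so that no interior point on any extended horizontal segment is ever closer to a relevant connection point than the corresponding original endpoint; making the choice of extension length and the accompanying case analysis precise (and checking the clause segments $s^{\lceil j/2\rceil}$ as well, since those carry the $\varepsilon$-cost that encodes satisfaction) is where the real work lies, but it is a routine extension of the geometric estimates already used for ${\cal T}_{\cal S}$.
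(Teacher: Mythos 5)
Your proposal matches the paper's own argument essentially verbatim: the paper likewise extends every horizontal segment by $\tfrac12$ beyond both endpoints and asserts that the optimal MIN-MST then selects the original endpoints, yielding the same cost as the MSTS instance and hence inheriting the \textsc{Max 2-Sat} equivalence. If anything, you are more careful than the paper, which states the ``snap to original endpoints'' claim (including for interior points of vertical segments) without the supporting exchange argument you correctly identify as the real work.
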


\section{Parameterized algorithm for the MSTS problem} \label{sec:para_algo}
In this section, we propose a parameterized algorithm, where the parameter $k$ is the \emph{separability} 
of the given instance. In \cite{dorrigiv2015minimum}, the concept of ``separability'' was introduced to propose a $(1 + \frac{2}{k})$-factor 
algorithm for the MST problem defined on disjoint disks. We observe that using the similar concept it is possible to obtain a $(1+\frac{1}{k})$-factor 
approximation for the MST problem on segments. We now define separability for segments.

Let $\ell_{max}$ be the maximum length of any segment in the given instance. We say that the 
given instance is \emph{$k$-separable} if the minimum distance between any two segments is 
at least $k\cdot \ell_{max}$. Note that there might be many values of $k$ 
satisfying the above distance requirement. The maximum $k$ for which the given instance is 
$k$-separable is called as the \emph{separability} of the given instance.

Suppose we are given a set ${\cal S} = \{s_1,s_2,\ldots,s_n\}$ of $n$ disjoint line segments that satisfies $k$-separability.
In our algorithm, for each segment $s_i$, we arbitrarily pick one of it's end-points, and compute an MST on the chosen set of 
end-points. Let $T$ and $T_{opt}$ be the obtained MST on the set of chosen points and an optimal MSTS\footnote{which also chooses an appropriate end-point from each 
segment such that it's cost is as minimum as possible} on $\cal S$, respectively. Needless to say that $cost(T_{opt}) \leq cost (T)$. 
Also, let $T'$ be a spanning tree which has the same topology\footnote{Its nodes are points used for $T$, but the connections are as follows: each edge $(s_i,s_j)$ 
(an end-point of $s_i$ is connected with an end-point of $s_j$) in $T_{opt}$ is present among the chosen end-points of $s_i$ and $s_j$ in $T'$.} as $T_{opt}$ but on the points used for $T$. Observe that 
$cost(T)\leq cost(T')$ as $T$ is an MST and $T'$ is a spanning tree on the same set of points. Consider an arbitrary edge 
$e'$ in $T'$. Let $s_i$ and $s_j$ be the segments of lengths $\ell_i$ and $\ell_j$, respectively, such that $e'=(p_i,p_j)$ is 
defined on the end-point of $p_i$ of $s_i$ and the end-point $p_j$ of $s_j$.
 If $d$ is the (smallest) 
distance between $s_i$ and $s_j$, then $cost(e')$ and the cost of the edge $e$ in $T_{opt}$ connecting $s_i$ and $s_j$ is greater than or equal to $d$. 

If $p_i'$ is the other end-point of $s_i$, then by triangle inequality we have $cost(e') \leq length(s_i) + length(p_j,p_i')$. Therefore,
\begin{equation*}
 \frac{cost(e)}{cost(e')} \geq \frac{d}{length(s_i) + length(p_j,p_i')} \geq \frac{k\cdot \ell_{max}}{\ell_{max} + k\cdot \ell_{max}}\geq \frac{k}{1+k},
\end{equation*}
thus $cost(e') \leq (1 + \frac{1}{k})\cdot cost(e)$. As this inequality is true for every edge of $T'$, we have the following result due to the 
fact that $cost(T) \leq cost(T')$.

\begin{theorem}\label{thm:para}
For the MSTS problem on disjoint segments with separability parameter $k > 0$, 
the algorithm that builds an MST by choosing an arbitrary end-point from each segment achieves an approximation factor of $(1+ \frac{1}{k})$.
\end{theorem}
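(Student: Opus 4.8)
Since the per-edge estimate and the comparison $cost(T_{opt})\le cost(T)\le cost(T')$ have effectively been developed in the discussion preceding the statement, the plan is to assemble these pieces into a clean proof. I would begin by recalling the three trees involved: $T$ is the MST the algorithm outputs on one arbitrarily chosen end-point per segment; $T_{opt}$ is an optimal MSTS on $\cal S$; and $T'$ is the spanning tree on the vertex set of $T$ that copies the adjacency structure of $T_{opt}$, so that the map sending an edge of $T_{opt}$ joining a point of $s_i$ and a point of $s_j$ to the edge of $T'$ joining the chosen end-points of $s_i$ and $s_j$ is a bijection between their edge sets. The goal reduces to $cost(T)\le(1+\tfrac1k)\,cost(T_{opt})$, and since $cost(T_{opt})\le cost(T)$ trivially (the algorithm realizes one feasible end-point selection) and $cost(T)\le cost(T')$ by minimality of the MST on the common point set, it suffices to bound $cost(T')$ against $cost(T_{opt})$.

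Next I would do this edge by edge along the above bijection. For an edge $e'=(p_i,p_j)$ of $T'$ supported on segments $s_i,s_j$ and its partner $e$ in $T_{opt}$, let $d$ be the distance between $s_i$ and $s_j$; then $cost(e),cost(e')\ge d$ and $d\ge k\,\ell_{max}$ by $k$-separability. The triangle inequality bounds $cost(e')$ by $cost(e)$ plus the cost of sliding end-points along the incident segments from the optimal placement to the arbitrary one, which is at most $\ell_{max}$; hence $cost(e')\le cost(e)+\ell_{max}\le cost(e)+\tfrac1k\,d\le(1+\tfrac1k)\,cost(e)$. Summing over all edges gives $cost(T')\le(1+\tfrac1k)\,cost(T_{opt})$, which together with the two inequalities above yields the claimed ratio, and this holds for every $k>0$.

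The genuine content — and the place to be careful — is the per-edge bound: one has to argue that replacing the optimal end-point of $e$ on each of the two incident segments by the algorithm's end-point lengthens the edge by at most a single $\ell_{max}$ rather than by $\ell_i+\ell_j$, which is precisely where the one-dimensionality of a segment (versus the two-dimensional slack of a disk) is exploited and where the factor improves on the $1+\tfrac2k$ bound known for disjoint disks. The remaining ingredients are routine: that $T'$ is indeed a spanning tree (it is a vertex-relabelling of the connected acyclic graph $T_{opt}$), that $cost(T)\le cost(T')$ (the MST is no costlier than any other spanning tree on the same points), and that the per-edge inequality is uniform in $e'$ so it survives the summation.
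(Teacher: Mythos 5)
Your overall skeleton is the same as the paper's: the chain $cost(T_{opt})\le cost(T)\le cost(T')$ followed by an edge-by-edge comparison of $T'$ with $T_{opt}$ along the natural bijection between their edge sets. Those parts are fine. The problem is the step you yourself single out as ``the genuine content'': the claim that replacing the optimal end-point by the algorithm's arbitrarily chosen end-point on \emph{both} incident segments lengthens the edge by at most a single $\ell_{max}$. You assert this and attribute it to the one-dimensionality of segments, but you give no argument, and none exists: the triangle inequality only yields $cost(e')\le \ell_i + cost(e) + \ell_j \le cost(e)+2\ell_{max}$ when both end-points change. Concretely, take two collinear segments of length $\ell_{max}$ whose facing end-points are at distance exactly $k\cdot\ell_{max}$ (so the instance is $k$-separable), let the optimum use the two facing end-points, and let the arbitrary choice pick the two far end-points; then $cost(e')=cost(e)+2\ell_{max}=(1+\frac{2}{k})\,cost(e)$, and since these are the only two segments, $T=T'$ and the algorithm's ratio is exactly $1+\frac{2}{k}$, not $1+\frac{1}{k}$.

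For what it is worth, the paper's own write-up of this step has the same weakness: it bounds $cost(e')\le length(s_i)+length(p_j,p_i')$ and then passes to $\frac{k\cdot\ell_{max}}{\ell_{max}+k\cdot\ell_{max}}$, which implicitly treats $length(p_j,p_i')$ as if it were at most $k\cdot\ell_{max}$, whereas separability only gives a \emph{lower} bound on that quantity; the two-segment configuration above defeats that inequality chain as well. So you have not overlooked an ingredient that the paper supplies --- but as a self-contained argument your per-edge bound is unsupported, and the only bound that actually follows from the triangle inequality together with $k$-separability is $cost(e')\le(1+\frac{2}{k})\,cost(e)$, i.e., the disk-style factor. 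To reach the stated $(1+\frac{1}{k})$ you would need either a non-arbitrary rule for selecting end-points or a genuinely new argument for the case in which both end-points of an edge differ from the optimal ones; as written, the case where only one end-point differs is the only one your sliding argument covers.
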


Theorem \ref{thm:para} suggests that if the value of $k$ is very large, then the solution produced by the algorithm is very close to the optimal. This happens only when the segments are far apart from each other.
\section{Approximation Algorithm}\label{sec:approx} 
In this section, we propose a constant factor approximation algorithm for the MSTS problem for 
a given set of $n$ non-crossing segments ${\cal S} = \{s_1,s_2,\ldots,s_n\}$. Let  
the end-points of $s_i$ be $p_i$ and $q_i$, i.e., $s_i=\overline{p_iq_i}$. Also, let $o_i$ be a hypothetical\footnote{The point $o_i$ is basically not present; it is used for the 
graph-theoretic formulation of the problem.} point associated with each 
segment $s_i \in S$, see Figure \ref{fig:apx}(a). We construct a graph $G=(V,E)$
with $3n$ nodes such that $V=\{p_i,q_i,o_i\mid i=1,2,\ldots,n\}$. The node $o_i$ is connected with the 
end-points,  $p_i$ and $q_i$, of 
its corresponding segment $s_i$. The edges in $E$ are as follows: first draw a 
complete graph with the nodes $\{p_i,q_i \mid i=1,2,\ldots,n\}$, and then remove the edges 
$\{(p_i,q_i)\mid i=1,2,\ldots,n\}$. For each $i\neq j$, the weight of the edges $(p_i,p_j)$, 
$(q_i,q_j)$, $(p_i,q_j)$, and $(p_j,q_i)$ are equal to their corresponding Euclidean distances. For 
each $i$, the weight of the edges $(p_i,o_i)$ and $(q_i,o_i)$ is 0 (zero)\footnote{Note that, $o_i$ can be reached only from $p_i$ and $q_i$, but with cost 0}. The nodes $\{o_i\mid i=1,\ldots, n\}$ are considered to be the \emph{terminal} nodes, and $\{p_i,q_i \mid
i=1,2,\ldots, n\}$ are \emph{non-terminal} nodes. 

\begin{obs}\label{obs}
If ${\cal T}_{st}$ is a minimum cost Steiner tree for spanning the terminal nodes in $G$, 
then the cost an MSTS for the set of segments in $\cal S$ is at least
$cost({\cal T}_{st})$. 
\end{obs}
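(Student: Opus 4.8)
The plan is to show that any valid MSTS for $\cal S$ induces a Steiner tree in $G$ of the same cost, so the minimum Steiner tree can only be cheaper. First I would take an optimal MSTS $\cal T^*$ for the segments: this tree picks exactly one endpoint $r_i \in \{p_i, q_i\}$ from each segment $s_i$, and connects these $n$ chosen points with edges whose total Euclidean length is $cost(\cal T^*)$, the optimum MSTS cost. I would then lift $\cal T^*$ into $G$: keep every edge of $\cal T^*$ (each such edge $(r_i, r_j)$ is present in $E$ with weight equal to its Euclidean length, since we only deleted the edges $(p_i,q_i)$ and the chosen endpoints of distinct segments are distinct nodes of $G$), and in addition append, for each $i$, the zero-weight edge $(r_i, o_i)$ joining the chosen endpoint to the terminal $o_i$.

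Next I would check that this edge set is a tree spanning all terminals. Since $\cal T^*$ is a spanning tree on the $n$ chosen endpoints, it is connected and acyclic on those $n$ nodes; attaching each $o_i$ as a pendant leaf via the single edge $(r_i, o_i)$ keeps it acyclic and connected, and now every terminal $o_i$ is included. Hence we have a Steiner tree in $G$ connecting all terminal nodes (the unchosen endpoints simply do not appear, which is allowed for a Steiner tree). Its cost is the cost of $\cal T^*$ plus $n$ edges of weight $0$, i.e. exactly $cost(\cal T^*)$. Therefore the minimum Steiner tree cost satisfies $cost(\cal T_{st}) \le cost(\cal T^*)$, which is precisely the claim $cost(\cal T^*) \ge cost(\cal T_{st})$.

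I do not expect a serious obstacle here; the argument is a direct construction. The one point requiring a little care is confirming that each edge of $\cal T^*$ really survives in $G$ — that is, that $\cal T^*$ never uses an edge between the two endpoints of the \emph{same} segment. This holds because an MSTS is a spanning tree on the $n$ chosen points (one per segment), so no edge of it can join two endpoints of one segment; thus every edge of $\cal T^*$ lies among the $p_ip_j, q_iq_j, p_iq_j$ edges that were retained in $E$. With that observation the construction goes through verbatim.
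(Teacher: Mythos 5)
Your argument is correct and is essentially the paper's own proof: both take an optimal MSTS, attach each terminal $o_i$ to the chosen endpoint of its segment via a zero-weight edge, and observe that the result is a Steiner tree of the same cost, whence $cost({\cal T}_{st}) \leq cost({\cal T}_{opt})$. Your version simply spells out the verification (acyclicity, the fact that no edge of the MSTS joins two endpoints of the same segment) that the paper leaves implicit.
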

\begin{proof}
If ${\cal T}_{opt}$ is an optimum MSTS, then we can get a feasible solution to the Steiner tree problem by adding a terminal node $o_i$ to an end-point of it's associate segment $s_i=\overline{p_iq_i}$ that participates in ${\cal T}_{opt}$. Thus, $cost({\cal T}_{st}) \leq cost({\cal T}_{opt})$.
\end{proof}

Now, we will propose a simple two-pass algorithm to compute a spanning tree for the segments in $S$. In the first pass, we execute the algorithm to compute 
a Steiner tree $\cal T$ for the graph $G$ with cost at most $\alpha\times {\cal T}_{st}$, where $\alpha$ is the
best-known approximation factor available in the literature for the Steiner tree problem.   
Note that, $\cal T$ may include both the end-points of some segment(s). If no such event is observed, report it as the result by removing the nodes $\{o_i\mid i=1, \ldots, n\}$ and the edges incident to those nodes. Otherwise, we execute the second pass 
to modify $\cal T$ to obtain another Steiner tree such that it contains exactly one end-point from each segment in $S$. 

\begin{enumerate}
\item Start from a non-terminal node $r$, called the {\it root}, and traverse the tree in pre-order (root and then all its children counterclockwise order\footnote{This ordering can easily be computed by considering the slopes of the half-lines originating at the root and coinciding with the edges incident at the root}). While traversing the tree, the segment whose both endpoints are present in 
$\cal T$ are pushed in a stack while it is recognized. These segments will be referred to as {\it bad} segments, and the end-points of a bad segment are referred to as \emph{bad end-points} (respectively {\it bad vertex} in $G$).

\item Consider the bad segments one by one from the stack, i.e., each time one bad segment is popped out from the stack for processing. It's bad end-point, say $a$, that is not connected with its corresponding terminal vertex, is deleted. Let $\chi = \{c_1, c_2, \ldots, c_k\}$ be the neighbors of $a$ in $\cal T$ in counterclockwise order. Now, connect $c_i,c_{i+1}$ for $i = 1,\ldots,k-1$, see Figure \ref{fig:apx}(d). %

\item The process in Step 2 terminates when the stack is empty.
\end{enumerate}
Let ${\cal T}'$ be the resulting tree after execution of Step 3 (i.e., after processing the last element in the stack, for example, see Figure \ref{fig:apx}(d) and Figure \ref{fig:apx_ana}(d)). The resulting tree ${\cal T}'$ is (i) a Steiner tree as it still connects every terminal node, and (ii) it contains exactly one end-point (i.e., non-terminal node) from each segment in $\cal S$. 

\begin{obs}
 A feasible solution to the MSTS problem can be obtained by removing all the terminal nodes $o_i$  from ${\cal T}'$, and it's cost is $cost({\cal T}')$.
\end{obs}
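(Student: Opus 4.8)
The plan is to obtain the statement directly from the two structural properties of ${\cal T}'$ already recorded just before it, namely that ${\cal T}'$ is a Steiner tree spanning every terminal $o_i$ and that it contains exactly one end-point of each segment $s_i$. The first step I would take is to observe that every terminal node $o_i$ is a \emph{leaf} of ${\cal T}'$. Indeed, in $G$ the only edges incident to $o_i$ are $(p_i,o_i)$ and $(q_i,o_i)$; since $o_i$ is spanned by ${\cal T}'$ at least one of $p_i,q_i$ must be present, and by property (ii) exactly one of them is present, so $o_i$ has precisely one neighbour in ${\cal T}'$.

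Next I would delete the terminals one at a time. Removing a leaf together with its unique incident edge from a tree again yields a tree, so after deleting all $n$ terminals $o_1,\dots,o_n$ we are left with a tree $\widehat{T}$ whose vertex set is exactly the set of chosen non-terminal end-points — one end-point from each $s_i\in{\cal S}$. Hence $\widehat{T}$ is a spanning tree on a point set that selects exactly one end-point per segment, which is by definition a feasible solution of the MSTS problem. For the cost, the only edges discarded in this process are edges of the form $(p_i,o_i)$ or $(q_i,o_i)$, each of weight $0$ by construction of $G$, so $cost(\widehat{T}) = cost({\cal T}') - 0 = cost({\cal T}')$.

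The only point needing genuine care — and it is already discharged in the paragraph preceding the statement — is that the second pass (Steps~1–3) really returns a tree ${\cal T}'$ that is connected, acyclic, spans all terminals, and contains exactly one non-terminal end-point per segment. Granting that, the observation is an immediate corollary: it is just the translation of a Steiner tree on $G$ back into a spanning tree on the selected points, exploiting that all terminal-incident edges are weightless. If one wanted a fully self-contained argument, one would instead verify the leaf property and the ``tree minus a leaf is a tree'' property by a trivial induction on the number of terminals removed; I expect no obstacle there, so I would keep the proof to the two or three lines sketched above.
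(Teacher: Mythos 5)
Your argument is correct and matches the paper's intent exactly: the paper states this observation without proof, treating it as immediate from the two properties of ${\cal T}'$ recorded just beforehand, and your filling-in (each $o_i$ is a leaf because exactly one of $p_i,q_i$ survives, deleting leaves preserves tree-ness, and the deleted edges all have weight $0$) is precisely the intended justification. No gap.
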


The following lemma describes the relation between the costs of the trees $\cal T$ and ${\cal T}'$. 
\begin{lemma} \label{lem:2_approx}
$cost({\cal T}') \leq 2\times cost({\cal T})$
\end{lemma}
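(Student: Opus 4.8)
The plan is to bound the total cost increase caused by the second pass by repeated use of the triangle inequality, and then to show that this total increase is itself at most $cost({\cal T})$. First I would look at a single reconnection step in isolation. Suppose the step deletes a bad vertex $a$ whose neighbours in the current tree, in counterclockwise order, are $c_1,c_2,\ldots,c_k$; then the step removes the $k$ edges $(a,c_1),\ldots,(a,c_k)$ and inserts the $k-1$ edges $(c_1,c_2),\ldots,(c_{k-1},c_k)$. Since all edge weights are Euclidean distances, $cost(c_i,c_{i+1})\le cost(a,c_i)+cost(a,c_{i+1})$ for every $i$, and summing over $i=1,\ldots,k-1$ gives
\[
\sum_{i=1}^{k-1} cost(c_i,c_{i+1}) \;\le\; cost(a,c_1)+cost(a,c_k)+2\sum_{i=2}^{k-1} cost(a,c_i) \;\le\; 2\sum_{i=1}^{k} cost(a,c_i).
\]
So in each step the weight of the edges added is at most twice the weight of the edges removed, and consequently the \emph{net} increase of the current tree's weight in that step is at most $\sum_{i=2}^{k-1} cost(a,c_i)$, i.e.\ strictly less than the weight of the edges it deletes.

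Next I would sum these per-step estimates. Writing ${\cal T}={\cal T}_0,{\cal T}_1,\ldots,{\cal T}_s={\cal T}'$ for the sequence of trees produced by the second pass, the telescoping identity $cost({\cal T}')=cost({\cal T})+\sum_j\bigl(\text{(added in step }j)-\text{(removed in step }j)\bigr)$ together with ``added $\le$ 2$\cdot$removed'' yields $cost({\cal T}')\le cost({\cal T})+R$, where $R$ is the total weight of all edges deleted over the entire pass. Hence it suffices to prove $R\le cost({\cal T})$. A deleted edge need not be an edge of ${\cal T}$ (it may be a shortcut produced in an earlier step), so I would unfold the triangle inequalities recursively: each edge ever present can be assigned a walk in the original tree ${\cal T}$ between its two endpoints of no larger weight, and I would then show that, over all steps, every edge of ${\cal T}$ is used by the walks of the deleted edges with total multiplicity at most one. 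In the parent/child picture of the pre-order traversal this says that an edge of ${\cal T}$ is charged only when its upper endpoint is a bad vertex and its lower endpoint is not the last child in counterclockwise order, and this occurs at most once per edge; combined with the previous paragraph this gives $cost({\cal T}')\le cost({\cal T})+R\le 2\,cost({\cal T})$.

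The main obstacle is exactly this last bookkeeping step: controlling how successive reconnection steps interact. A shortcut edge inserted while deleting $a$ can have an endpoint that is itself a bad vertex deleted later, so the reconnections chain together, and a careless count would let the weight of a single edge of ${\cal T}$ be charged more than twice, yielding only a factor $4$ (the bound one already gets for free from \cite{pop}). The way I would resolve this is to exploit the traversal order: because bad segments are \emph{recognized} and pushed in pre-order and \emph{popped} in LIFO order, at the moment $a$ is processed its relevant neighbourhood is already ``settled'' in every direction except (at most) the one leading towards the root, so the induced chain of shortcuts behaves like a path and no edge of ${\cal T}$ re-enters the accounting. Making this structural claim precise — and in particular ruling out the scenario where two bad vertices adjacent in the current tree cause a shortcut to be doubled and then doubled again — is the technical heart of the proof; everything else is the routine triangle-inequality telescoping sketched above.
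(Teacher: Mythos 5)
Your first paragraph reproduces the paper's opening move exactly: the per-vertex triangle-inequality bound (the new path through $c_1,\ldots,c_k$ costs at most twice the deleted star at $a$), and, like the paper, you correctly identify the interaction between successive deletions --- a shortcut created while removing one bad vertex being removed again later --- as the crux. The problem is the specific way you propose to close that gap. You reduce the lemma to the claim $R\le cost({\cal T})$, where $R$ is the total weight of \emph{all} edges ever deleted, to be proved by showing that the walks in ${\cal T}$ assigned to deleted edges use each edge of ${\cal T}$ with total multiplicity at most one. That claim is false. Take two bad end-points $a$ and $b$ that are adjacent in ${\cal T}$, with $a$ processed first and $b$ among its neighbours. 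The edge $(a,b)$ is deleted once in $a$'s step; in $b$'s step the shortcuts through $a$ that are now incident to $b$ are deleted, and each of their canonical walks (e.g.\ $c_1\!-\!a\!-\!b$) uses $(a,b)$ again, so $(a,b)$ is charged up to three times in $R$. Concretely, on the path $c_1\!-\!a\!-\!b\!-\!d_1$ with $c_1,a,b$ nearly collinear, $cost(a,c_1)=cost(a,b)=1$ and $cost(b,d_1)=\epsilon$, one gets $R=1+1+cost(c_1,b)+\epsilon\approx 4$ while $cost({\cal T})\approx 2+\epsilon$; so $R\le cost({\cal T})$ fails and your telescoping delivers only a factor of about $3$ on that instance.

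The paper avoids this dead end by keeping the accounting local: each step's added edges are charged against that step's removed edges, and its Case~2 explicitly tracks how a removed shortcut decomposes back into original edges via $cost(c_i,d_j)\le cost(c_i,a)+cost(a,b)+cost(b,d_j)$, maintaining the invariant that every original edge of ${\cal T}$ is accounted for at most twice over the entire pass. Equivalently, one bounds each \emph{surviving} edge of ${\cal T}'$ by a walk in ${\cal T}$ and shows every ${\cal T}$-edge lies on at most two such walks --- a statement about the final tree, not about the multiset of deletions, which is why it survives the chaining that kills your bound on $R$. So your overall strategy coincides with the paper's, but the intermediate target you chose is unattainable and the bookkeeping has to be redone in this per-step (or per-surviving-edge) form; establishing that two-fold invariant rigorously, using the counterclockwise reconnection order, is exactly the content of the paper's Case~2.
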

\begin{proof}
Let $a$ be a bad vertex of $G$ which is an end-point of a bad segment $s_i$, where $s_i$ is the top element of the stack at the beginning of the second pass. If $\{c_1, c_2, \ldots, c_k\}$ are 
the nodes adjacent with $a$ in $\cal T$, then in the revised tree, the newly introduced edges satisfy $cost(c_i,c_{i+1}) \leq cost(c_i,a) + cost(c_{i+1},a)$, $i=1,2,\ldots, k-1$ (by triangle inequality). Thus, the cost of every edge $(c_i,a)$ is counted twice and each edge $(c_i,c_{i+1})$ is counted once for $i=1,2,\ldots, k-1$. Hence the sum of costs of the newly added edges is at most twice the cost of the removed edges from $T$. 
Let $s_j$ be the next element in the stack (if any), and let $b$ be the bad end-point considered in Step 2. Only the following situations may take place.

\begin{figure}[!h]
    \centering
    \subfigure[]{\includegraphics[width=0.35\textwidth]{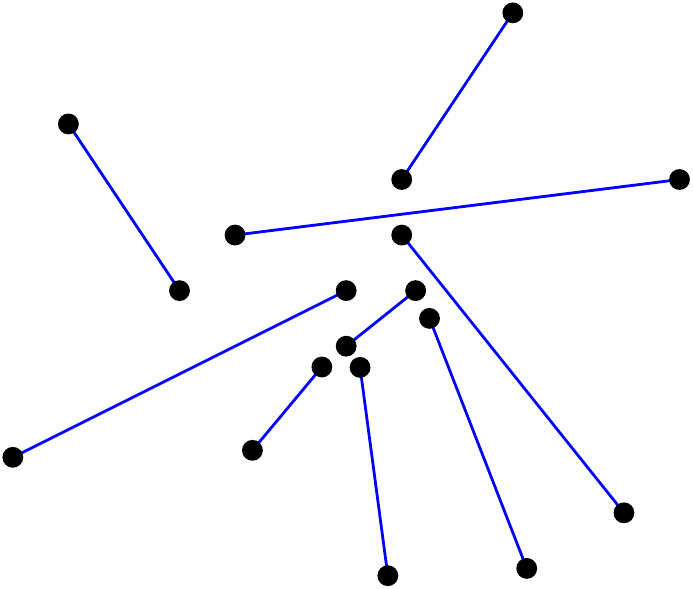}} \hspace*{0.5in}
    \subfigure[]{\includegraphics[width=0.35\textwidth]{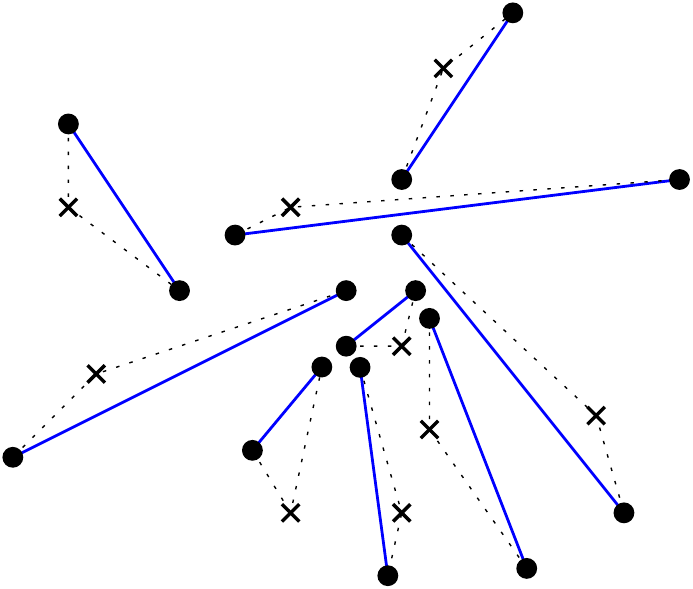}} \\
    \subfigure[]{\includegraphics[width=0.35\textwidth]{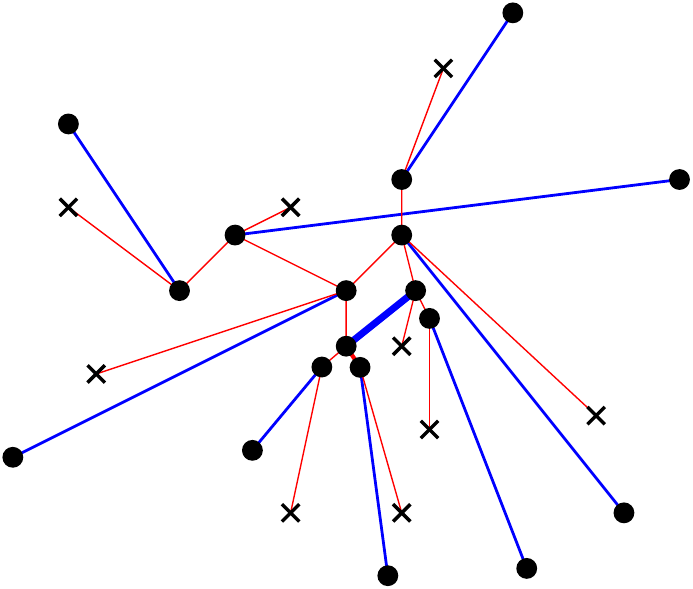}}\hspace*{0.5in}
    \subfigure[]{\includegraphics[width=0.35\textwidth]{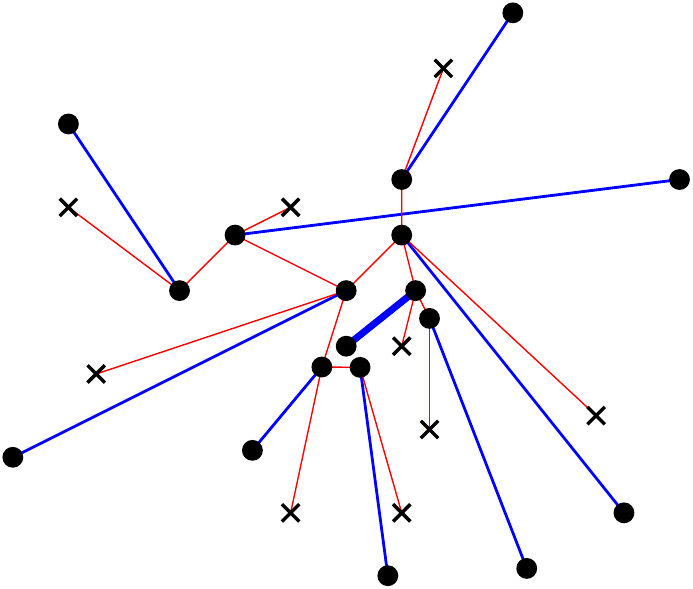}}
    \caption{(a) A set of line segments in $I\!\!R^2$, (b) Terminal nodes, shown using cross marks, associated with each segment, (c) A Steiner tree which has both the end-points of a {\it bad} segment (showed in thick), and (d) A modified Steiner tree: we can obtain a spanning tree for the segments by deleting the terminal nodes}
    \label{fig:apx}
\end{figure}

\begin{description}
 \item[Case 1:] The bad end-point $b$ of $s_j$ is adjacent only to non-bad
 vertices in $\cal T$, see Figure \ref{fig:apx}(c) and Figure \ref{fig:apx}(d). This situation is handled as earlier.
  \item[Case 2:] The bad end-point $b$ of $s_j$ is adjacent with an already 
  processed bad vertex (say $a$) of $\cal T$, see \ref{fig:apx_ana}(c). The other neighbors of $b$ are $\{d_1, d_2, \ldots\}$. Note that, none of $d_i$s are same as any of $c_i$'s; otherwise it contradicts the fact that $\cal T$ is a tree. As earlier, we delete $b$ from the resulting tree obtained in the previous iteration and connect the neighbors of $b$ in counterclockwise order to obtain a tree for the next iteration. Observe that one of $c_i$s will be adjacent to one of $d_j$s.
  \begin{figure}
   \subfigure[]{\includegraphics[scale=0.75]{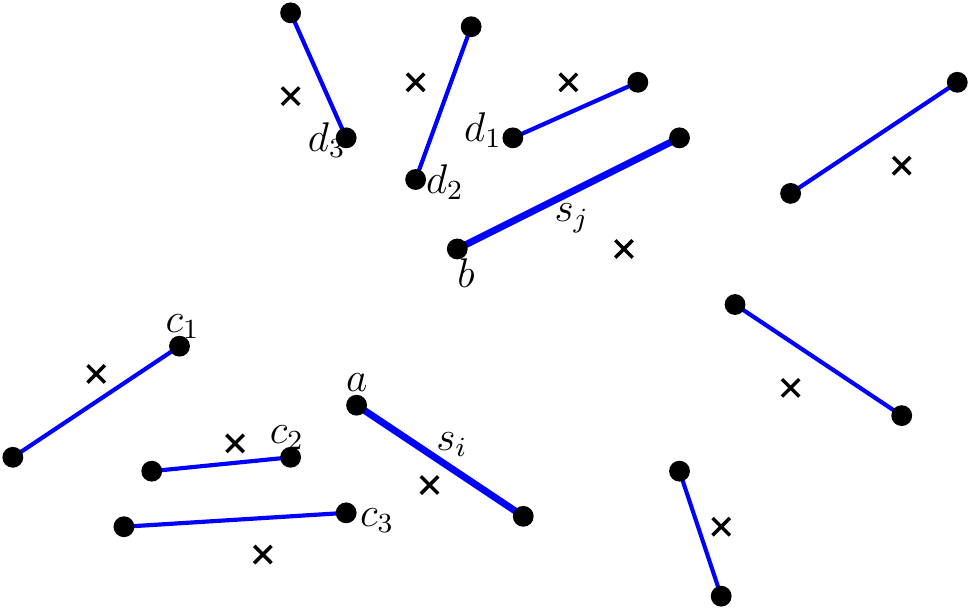}}\hspace*{0.5in}
   \subfigure[]{\includegraphics[scale=0.75]{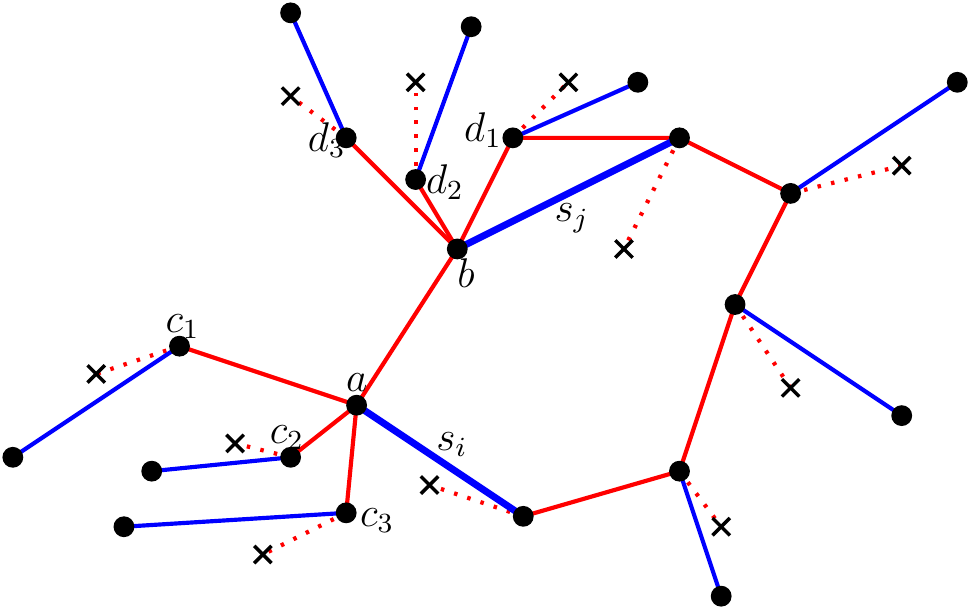}}\\
   \subfigure[]{\includegraphics[scale=0.75]{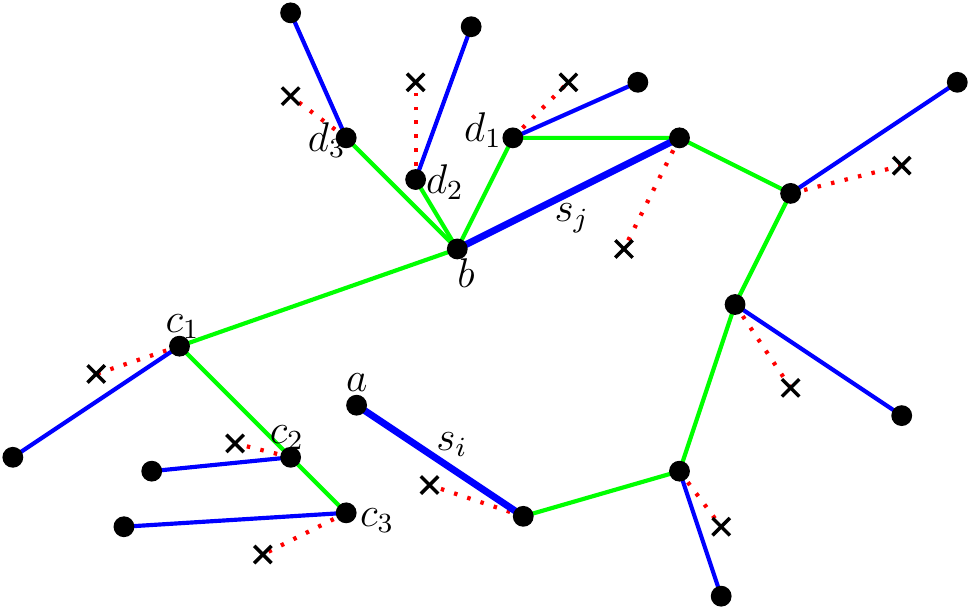}}\hspace*{0.5in}
   \subfigure[]{\includegraphics[scale=0.75]{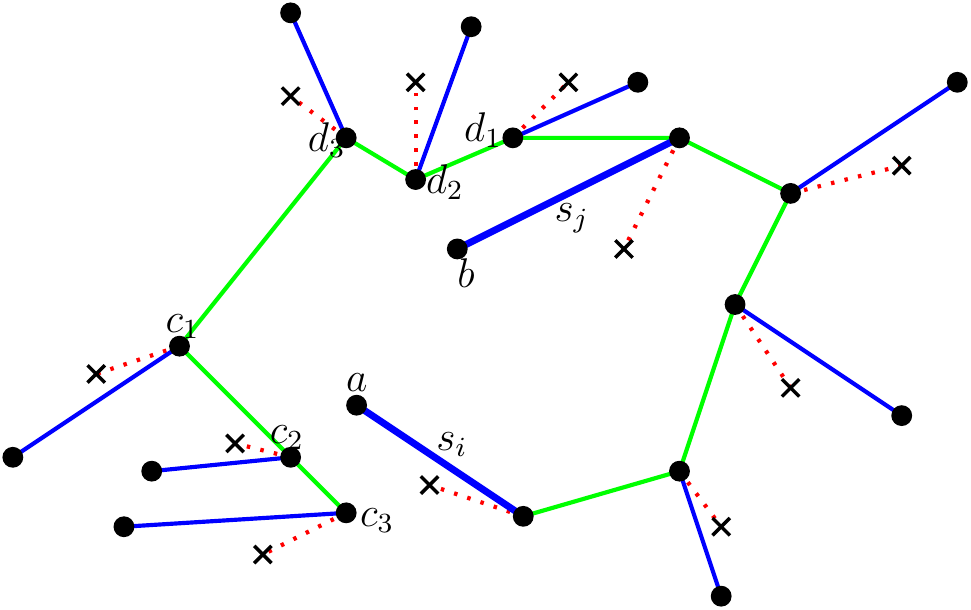}}
   \caption{(a) An instance to the STP with terminal nodes (showed in cross), (b) A Steiner tree spanning all the terminals, (c) The resulting tree after deletion of the bad vertex $a$, and (d) The resulting tree after deletion of the bad vertex $b$ which was adjacent to $a$ in ${\cal T}$. Initially, the stack contains $s_j$ and $s_i$ with $s_i$ as it's topmost element. We first process the segment $s_i$ and then $s_j$. The resulting trees after processing the segments $s_i$ and $s_j$ are shown in (c) and (d), respectively.}\label{fig:apx_ana}
  \end{figure}
Hence, $cost(c_i,d_j) \leq cost(c_i,b) + cost(b,d_j) \leq cost(c_i,a) +cost(a,b) + cost(b,d_j)$. Thus, in ${\cal T} \setminus \{(c_i,a)\}$ (i.e., after considering the deletion of $a$), the cost of the edge $(a,b)$ is accounted for (deleted) only once, the cost of the edge $(c_i,a)$ is accounted for (deleted) only twice (for deletion of $b$ in this step), and no other edges in $\{(c_i,a)\mid i=1, 2, \ldots \}$ is considered in this step. Note that, $(a,b)$ was  deleted once from $\cal T$ while deleting the vertex $a$, and is considered while deleting the vertex $b$ (in the above inequality)   as $a$ is a neighbor of $b$ in ${\cal T}$, and it will never be considered for deletion in this phase 
further. Same argument follows for processing other elements of the stack and the above invariant is ensured all the time.
\end{description}
Finally, adding the inequalities, the resulting sum of lengths of the edges of ${\cal T}'$ is at most twice the sum of lengths of the edges of $\cal T$. 
\end{proof}

We construct a tree, say ${\cal T}''$, from ${\cal T}'$ by deleting all the terminal nodes from ${\cal T}''$, and we have the following result.

\begin{theorem}
Let ${\cal S} = \{s_1,s_2,\ldots,s_n\}$ be a set of $n$ non-crossing line segments in the plane. If $T_{opt}$ be the optimal
MSTS, then $cost({\cal T}'') \leq 2\alpha \times cost(T_{opt})$, where $\alpha$ is the best-known approximation factor for 
the minimum Steiner tree problem for an undirected graph.  
\end{theorem}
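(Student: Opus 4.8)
The plan is to chain together the three facts already established in this section. First I would invoke Observation~\ref{obs} to get a lower bound on the optimal MSTS cost in terms of the optimal Steiner tree: $cost({\cal T}_{st}) \le cost(T_{opt})$. Next I would recall that the first pass of the algorithm runs an $\alpha$-approximation for the Steiner tree problem on $G$, so the tree ${\cal T}$ produced satisfies $cost({\cal T}) \le \alpha \cdot cost({\cal T}_{st})$. Then I would apply Lemma~\ref{lem:2_approx}, which says the second pass only inflates the cost by a factor of two, i.e.\ $cost({\cal T}') \le 2\,cost({\cal T})$. Finally, since ${\cal T}''$ is obtained from ${\cal T}'$ purely by deleting terminal nodes $o_i$, and all edges incident to those nodes have weight zero, we have $cost({\cal T}'') = cost({\cal T}')$.

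Putting the chain together: $cost({\cal T}'') = cost({\cal T}') \le 2\,cost({\cal T}) \le 2\alpha\,cost({\cal T}_{st}) \le 2\alpha\,cost(T_{opt})$, which is exactly the claimed bound. I would also note explicitly that ${\cal T}''$ is a valid feasible solution to the MSTS problem — it spans exactly one endpoint of each segment (this was observed just after the description of the second pass, property (ii) of ${\cal T}'$, which persists after removing the zero-weight terminal edges) — so the comparison to $cost(T_{opt})$ is meaningful and not vacuous.

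There is no real obstacle here; the theorem is essentially a bookkeeping corollary of Observation~\ref{obs}, the definition of $\alpha$, and Lemma~\ref{lem:2_approx}. The one point that deserves a sentence of care is the equality $cost({\cal T}'') = cost({\cal T}')$: one must observe that every edge removed in passing from ${\cal T}'$ to ${\cal T}''$ is of the form $(p_i,o_i)$ or $(q_i,o_i)$, and these have weight $0$ by construction of $G$, so the deletion changes neither the edge set's total weight nor the connectivity among the non-terminal nodes. If anything subtle lurks, it is entirely inside Lemma~\ref{lem:2_approx} (the amortized charging argument across successive bad vertices), but that lemma is already proved, so the present theorem is a two-line deduction.
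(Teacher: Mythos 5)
Your proposal is correct and follows exactly the paper's own argument: it chains $cost({\cal T}'')=cost({\cal T}')\le 2\,cost({\cal T})\le 2\alpha\,cost({\cal T}_{st})\le 2\alpha\,cost(T_{opt})$ using Lemma~\ref{lem:2_approx}, the definition of $\alpha$, and Observation~\ref{obs}, together with the feasibility of ${\cal T}''$. The extra remark about the deleted terminal edges having weight zero is a welcome (if minor) clarification that the paper leaves implicit.
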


\begin{proof}
Observe that (i) $cost({\cal T}'') = cost({\cal T}')$, and (ii) ${\cal T}''$ is a feasible solution for the MSTS problem as it contains exactly one end-point form each segment $s_i \in {\cal S}$. By Lemma \ref{lem:2_approx}, we get $cost({\cal T}') \leq 2 \times cost({\cal T}) \leq 2\alpha \times cost({\cal T}_{st})$. Again, 
$cost({\cal T}_{st}) \leq cost(T_{opt})$ due to Observation \ref{obs}. Thus, we have $cost({\cal T}'') \leq 2\alpha \times cost({\cal T}_{opt})$.
\end{proof}
To the best of our knowledge, the value of $\alpha$ is $\ln 4 + \varepsilon < 1.39$, where $\varepsilon$ is a positive constant \cite{byrka2010improved}. Thus, we claim that 
the MSTS problem admits a 2.78-factor approximation algorithm. However, the proposed algorithm in \cite{byrka2010improved} is an LP-based iterative randomized rounding technique (later they derandomize it) in which an LP is solved at every iteration after contracting a component. Recently, Chen and Hsieh \cite{chen2020efficient} claim to propose an efficient two-phase heuristic in greedy strategy that achieves an approximation ratio of 1.4295. By using the algorithm in \cite{chen2020efficient} in the first phase of our algorithm, we can speedup our algorithm with scarifying approximation factor a little bit.

\section{Conclusion}
We considered the Euclidean minimum spanning problem for the non-crossing line segments neighborhoods, and we showed that the problem is NP-hard in general. For the restricted MSTS problem, we proposed (i) a parametrized approximation algorithm based on the separability parameter defined for segments, and (ii) a $2\alpha$-factor approximation algorithm which uses the best-known $\alpha$-factor approximation algorithm as a subroutine to compute a minimum cost Steiner tree in an undirected edge weighted graph. Another point worth concluding here is that our algorithm can also be used to solve a variant of the GMST problem considered in \cite{pop}, and our algorithm produces a solution whose cost is at most 2.78 times to the optimum. This is a notable improvement over the 4-factor 
approximation proposed in \cite{pop} and it is the best-known approximation available in the literature. Improving the approximation factor for the MSTS problem may be a problem of further interest.  

\bibliographystyle{plain}
\bibliography{msts}

\end{document}